\newtheorem{theorem}{Theorem}
\pgfplotsset{width=7cm,compat=1.3}
\newcommand{\cmark}{\ding{51}}%
\newcommand{\xmark}{\ding{55}}%
\pgfplotsset{select coords between index/.style 2 args={
    x filter/.code={
        \ifnum\coordindex<#1\fi
        \ifnum\coordindex>#2\fi
    }
}}
\newtheorem{corollary}{Corollary} % fixed typo!
\newcommand{\pk}[1]{#1} % color out
\DeclareMathOperator*{\argmax}{arg\,max}
\newcommand{\SeedSet}{S}
\newcommand{\SeedSetT}{T}
\newcommand{\FP}{\operatorname{fp}}
\newcommand{\NP}{\ensuremath{\operatorname{\mathbf{NP}}}}
\newcommand{\SizeConst}{k}
\newcommand{\goesto}{\rightarrow}
\newcommand{\Config}{X}
\newcommand{\ConfigY}{Y}
\newcommand{\ConfigZ}{Z}
\newcommand{\RandomConfig}{\mathcal{X}}
\newcommand{\RandomConfigY}{\mathcal{Y}}
\newcommand{\RandomConfigZ}{\mathcal{Z}}
\newcommand{\Prob}{\mathbb{P}}
\newcommand{\Expectation}{\mathbb{E}}
\newcommand{\Variance}{\operatorname{Var}}
\newcommand{\ConstantOne}{\mathbf{1}}
\newcommand{\MinResidentFitness}{\FitnessR_{\min}}
\newcommand{\MaxMutantFitness}{\FitnessM_{\max}}
\newcommand{\ActiveNodes}{\mathcal{A}}
\newcommand\numberthis{\addtocounter{equation}{1}\tag{\theequation}}
\newlist{compactenum}{enumerate}{3} % 3 is max-depth
\setlist[compactenum]{label=(\arabic*), nosep,leftmargin=*}
\crefname{compactenumi}{Item}{Items}
\newcommand{\Fitness}{f}
\newcommand{\TotalFitness}{F}
\newcommand{\FitnessR}{r}
\newcommand{\FitnessM}{m}
\newcommand{\FitnessG}{\mathcal{G}}
\newcommand{\Weight}{w}
\newcommand{\Paragraph}[1]{{\bf #1}.}
\newcommand{\MP}{\mathcal{M}}
\newcommand{\Universe}{\mathcal{U}}
\newcommand{\Sets}{\mathcal{S}}
\newcommand{\degree}{d}
\definecolor{color1}{HTML}{7EA6E0}
\definecolor{color2}{HTML}{EA6B66}
\definecolor{color3}{HTML}{97D077}
\definecolor{color4}{HTML}{CC6600}
\title{Seed Selection in the Heterogeneous Moran Process}
\author {
Petros Petsinis$^1$
\and
Andreas Pavlogiannis$^1$\and
Josef Tkadlec$^{2}$\And
Panagiotis Karras$^{3,1}$\\
\affiliations
$^1$Department of Computer Science, Aarhus University, Aarhus, Denmark\\
$^2$Computer Science Institute, Charles University, Prague, Czech
Republic \\
$^3$Department of Computer Science, University of Copenhagen, Copenhagen, Denmark\\
\emails
\{petsinis, pavlogiannis\}@cs.au.dk,
josef.tkadlec@iuuk.mff.cuni.cz,
piekarras@gmail.com
}
\begin{document}
\maketitle
%----------------------------------------------------------------------------
%ABSTRACT
%----------------------------------------------------------------------------
\begin{abstract}
The \emph{Moran process} is a classic stochastic process that models the rise and takeover of novel traits in network-structured populations.
In biological terms, a set of \emph{mutants}, each with 
fitness~$\FitnessM\in(0,\infty)$ invade a population of \emph{residents} with fitness~$1$. 
Each agent reproduces at a rate proportional to its fitness and each offspring replaces a random network neighbor. 
The process ends when the mutants either fixate (take over the whole population) or go extinct.
The \emph{fixation probability} measures the success of the invasion.
To account for environmental heterogeneity,
we study
a generalization of the Standard process, called the \emph{Heterogeneous} Moran process.
Here, the fitness of each agent is determined both by its type (resident/mutant) and the node it occupies.
We study the natural optimization problem of \emph{seed selection}:~given a budget $k$, which~$k$ agents should initiate the mutant invasion to maximize the fixation probability?
We show that the problem is strongly inapproximable:~it is $\NP$-hard to distinguish between maximum fixation probability 0 and 1.
We then focus on \emph{mutant-biased} networks, where each node exhibits at least as large mutant fitness as resident fitness.
We show that the problem remains $\NP$-hard, but the fixation probability becomes submodular, and thus the optimization problem admits a greedy $(1-\sfrac{1}{e})$-approximation.
An experimental evaluation of the greedy algorithm along with various heuristics on real-world data sets corroborates our results.
\end{abstract}
%----------------------------------------------------------------------------
%INTRODUCTION
%----------------------------------------------------------------------------
\section{Introduction}\label{sec:introduction}

Modeling and analyzing the spread of a novel trait (e.g., a trend, meme, opinion, genetic mutation) in a population is vital to our understanding of many real-world phenomena. 
Typically, this modeling involves a \emph{network invasion process}:~nodes represent agents/spatial locations, edges represent communication/interaction between agents,  and local stochastic rules define the dynamics of trait spread from an agent to its neighbors.

Network diffusion processes raise several \emph{optimization} challenges, whereby we control elements of the process to achieve a desirable emergent effect. A well-studied problem is that of influence maximization, which calls to find a \emph{seed set} of agents initiating a peer-to-peer influence dissemination that maximizes the expected spread thereof; the problem arises in various diffusion models, such as Independent Cascade and Linear Threshold~\cite{kempe2003maximizing,domingos2001mining,mossel2007submodularity,li2011modeling,logins2020}, the Voter model~\cite{EvenDar2007,Durocher2022}, content-aware models~\cite{ivanov2017content}, models of multifaceted influence~\cite{li2019}, and geodemographic models of agent mobility~\cite{zhang2020geodemographic}.

Diffusion processes also play a key role in \emph{evolutionary dynamics}, which model the rules underpinning the sweep of novel genetic mutations in populations and the emergence of new phenotypes in ecological environments~\cite{Nowak2006}. 
A classic evolutionary process is the \emph{Moran process}~\cite{moran1958random}. 
In high level, a set of \emph{mutants}, each  with fitness~$\FitnessM\in (0,\infty)$, invade a preexisting population of \emph{residents}, each with fitness normalized to~$1$. 
Over time, each agent reproduces with rate proportional to its fitness, while the produced offspring replaces a random neighbor. 
In the long run, the new mutation either \emph{fixates} in the population (i.e., all agents become mutants) or \emph{goes extinct} (i.e., all agents remain residents). 
The probability of fixation is the main quantity of interest, especially under advantageous mutations ($\FitnessM>1$).

Network structure affects the fixation probability~\cite{Lieberman2005,Allen2017}, and may both amplify it~\cite{adlam2015amplifiers} and suppress it~\cite{Giakkoupis16,Mertzios2018}, while certain structures nearly guarantee mutant fixation~\cite{Giakkoupis16,Goldberg2019,pavlogiannis2018construction,tkadlec2021fast}.
The Moran process thus provides a simple stochastic model by which a community of communicating agents reaches consensus; one option has an advantage over another, yet its prevalence (i.e., fixation) depends on the positioning of its initial adherents (i.e., mutants) and on the network structure.

Recent work aims to make the Moran process more realistic by incorporating some form of \emph{environmental heterogeneity}~\cite{Maciejewski2014,brendborg2022fixation,melissourgos2022extension,svoboda2023coexistence}.
Here, the fitness of an agent is not only a function of its type (resident/mutant), but also of its location in space, i.e., the node that it occupies.
For example, in a biological setting, the ability to metabolise a certain sugar boosts growth more in environments where such sugar is abundant.
Similarly, in a social setting, the spread  a trait is more, or less viral depending on the local context (e.g., ads, societal predispositions).
Analogous extensions have been recently considered for the Voter evolutionary model~\cite{Anagnostopoulos2020,Becchetti2023,petsinis2022maximizing}.

In this work we generalize the classic Moran process to account for complete environmental heterogeneity, obtaining the \emph{Heterogeneous Moran process}:~for every network node $u$, a mutant (resp., resident) occupying $u$ exhibits fitness $\FitnessM(u)$ (resp., $\FitnessR(u)$) specific to that node.
We then study the  natural optimization problem of \emph{seed selection:~given a budget $k$, which $k$ nodes should initiate the mutant invasion so as to maximize the fixation probability?}
Although the seed selection problem has been studied extensively in other diffusion models,
this is the first paper to consider it in Moran models.
We obtain upper and lower bounds for the complexity of this problem in our Heterogeneous model,
which also imply analogous results to other relevant Moran models.

\begin{figure}[!t]
\centering
\includegraphics[width=0.49\textwidth]{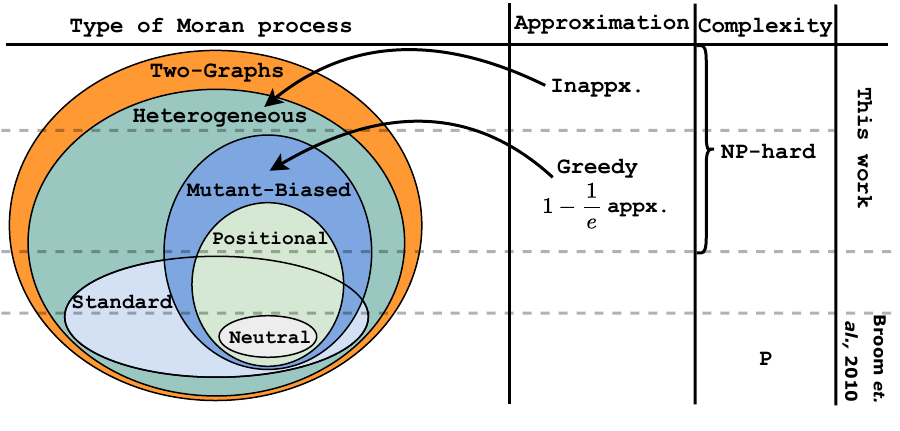}
\caption{
Moran processes (with and without environmental heterogeneity) and the complexity of seed selection.
}
\label{fig:hardness}
\end{figure}

\Paragraph{Contributions}
Our main theoretical results are as follows (see \cref{fig:hardness} for a summary in the context of Moran models).
\begin{compactenum}
\item We prove that computing the fixation probability admits a FPRAS on undirected and unweighted networks that are \emph{mutant-biased}, where $\FitnessM(u)\geq\FitnessR(u)$ for every node $u$. %\Petros{*}
\item We show that the optimization problem is strongly inapproximable:~for any $0<\varepsilon<\sfrac{1}{2}$, it is $\NP$-hard to distinguish between maximum fixation probability $\leq \varepsilon$ and $>1-\varepsilon$.
\item We then focus on mutant-biased networks.
We show that the optimization problem remains $\NP$-hard to solve exactly, but the fixation probability becomes submodular, yielding
a greedy $(1-\sfrac{1}{e})$-approximation.
\end{compactenum}
Further, through experimentation with real-world data, we establish that the greedy algorithm outperforms standard heuristics for seed selection and uncovers high-quality seed sets with diverse data sets and problem parameters.
%Due to space constraints, we relegate some proofs to the paper's full version~\cite{petsinis2024seed}.
Due to space constraints, we relegate some proofs to the \cref{sec:app}.

\Paragraph{Technical Challenges}
Seed selection was studied recently under the Voter model, which bares some resemblance to the Moran model~\cite{Durocher2022}. However, the two models are distinct, and results in one do not transfer to the other. Some novel technical challenges we address are as follows.
\begin{compactenum}
\item 
Our NP-hardness and inapproximability proofs are fundamentally different from the NP-hardness of~\cite{Durocher2022}, and are not limited to weak selection (mutant advantage $\epsilon \goesto 0$).
\item 
Our submodularity proof is based on introducing a novel variant of the Moran process that we call the Loopy process.
This also allows us to show that the Heterogeneous Moran process
is a special case of the Two-Graph Moran process~\cite{melissourgos2022extension}, thereby extending
our hardness results to the latter.
\item 
Our model accounts for environmental heterogeneity,
while the Voter model in \cite{Durocher2022} does not. 
This complicates our proof for FPRAS.
\end{compactenum}
%----------------------------------------------------------------------------
%PRELIMINARIES
%----------------------------------------------------------------------------
\section{Preliminaries}\label{sec:preliminaries}
In this section we introduce the Heterogeneous Moran process and the problem of seed selection.

\Paragraph{Population structure} 
We consider a population of agents structured as a weighted directed graph~$G =(V, E, \Weight)$, where each node~$u\in V$ stands for a single agent, each edge~$(u,v)\in E$ signifies that~$u$ influences~$v$, and~$\Weight(u,\cdot )$ is a probability distribution expressing the frequency at which~$u$ influences~$v$. $G$ is strongly connected, i.e., any two nodes are connected by a sequence of edges of non-zero weight. We call $G$ \emph{undirected} if $E$ is symmetric and $\Weight(u,\cdot )$ is uniform.

\Paragraph{Fitness graphs} 
Trait diffusion in the Heterogeneous Moran process occurs by associating each node with a type: at each moment in time, each node $u$ is either \emph{resident} or \emph{mutant}. 
Moreover, $u$ is associated with a type-dependent \emph{fitness} that represents the rate at which~$u$ influences its neighbors while being resident or mutant. We denote the respective fitness values by~$\FitnessR(u)$ and~$\FitnessM(u)$, as functions $\FitnessR, \FitnessM \colon V\to (0,\infty)$. 
We call the triplet $\FitnessG=(G, (\FitnessM, \FitnessR))$ a \emph{fitness graph}, and denote the minimum and maximum resident and mutant fitnesses in $\FitnessG$ as $\MinResidentFitness=\min_{u\in V}\FitnessR(u)$, and $\MaxMutantFitness=\max_{u\in V}\FitnessM(u)$.
We call $\FitnessG$ \emph{mutant-biased} if for all $u\in V$, we have $\FitnessM(u)\geq \FitnessR(u)$.
 
\Paragraph{The Heterogeneous Moran process} 
A \emph{configuration} is a subset of nodes $\Config\subseteq V$, representing the mutant nodes in $\FitnessG$ at some time point.
The \emph{fitness} of node $u$ in $\Config$ is defined as 
\begin{linenomath*}
\[
\Fitness_{\Config}(u) = 
\begin{cases}
\FitnessM(u),& \text{if } u\in \Config \\
\FitnessR(u),& \text{otherwise}
\end{cases}
\]
\end{linenomath*}
i.e., it is $\FitnessM(u)$ if $u$ is mutant and~$\FitnessR(u)$ if $u$ is a resident.
At time $t=0$ a seed set $\SeedSet\subseteq V$ 
specifies the nodes where mutant invasion begins.
The Heterogeneous Moran process is a 
discrete-time 
stochastic process $\RandomConfig_0, \RandomConfig_1, \dots,$ of stochastic configurations~$\RandomConfig_t \subseteq V$, where
$\RandomConfig_0 = \SeedSet$ and 
for each $t>0$,
$\RandomConfig_{t+1}$ is obtained from $\RandomConfig_{t}$ by two successive random steps:
\begin{compactenum}
\item \emph{Birth Event:} Pick a node~$u$ for reproduction with probability proportional to its fitness,
$\frac{\Fitness_{\Config}(u)}{\sum_{v\in V}\Fitness_{\Config}(v)}$.
\item \emph{Death Event:} Pick a neighbor~$v$ of~$u$ with probability~$\Weight(u,v)$ and make $v$ have the same type as $u$.
\end{compactenum}
Note that the mutant set can both grow and shrink over time.
\cref{fig:pos_moran} illustrates the process on a small example.

\begin{figure}[!t]
\centering
\includegraphics[width=0.44\textwidth]{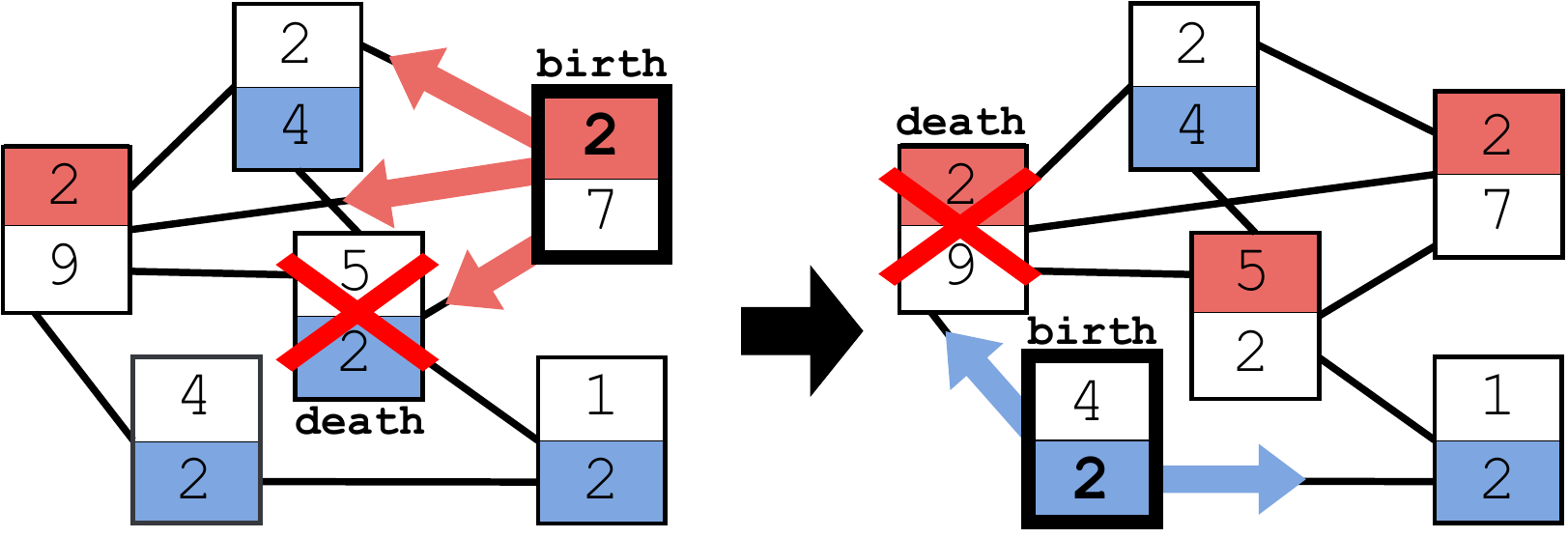}
\caption{
Two steps in the Heterogeneous Moran process; 
mutants/residents are marked in red/blue;
the numbers indicate 
type-dependent mutant/resident fitness
(top/bottom).
}
\label{fig:pos_moran}
\end{figure}

\Paragraph{Relation to other Moran processes}
We recover the Standard Moran process~\cite{moran1958random} as a special case of the Heterogeneous process with $\FitnessR(u)=1$ and $\FitnessM(u)$ constant for all $u\in V$.
The Neutral Moran process is a further special case of the Standard process, in which residents and mutants have equal fitness. 
The Positional Moran process~\cite{brendborg2022fixation} parametrizes the Standard process with an active set of nodes $\ActiveNodes$, which define the node fitness as $\Fitness_{\Config}(u)=1+\delta$ if $u\in \Config\cap\ActiveNodes$ and $\Fitness_{\Config}(u)=1$ otherwise. This is also a special case of the Heterogeneous process, with~$\FitnessR(u)=1$ and~$\FitnessM(u) = 1 + \delta$ if~$u \in \ActiveNodes$ and~$\FitnessM(u) = 1$ otherwise. The Two-Graphs Moran process~\cite{melissourgos2022extension} lets mutants and residents propagate via different, type-specific graphs $G_M$ and $G_R$, respectively, over the same set of nodes but with  different edge sets. The Two-Graphs process generalizes the Heterogeneous process, a connection formally implied by an intermediate result we derive in \cref{sec:submodularity}.

\Paragraph{Fixation probability} 
In the long run, mutants either \emph{fixate} with~$\RandomConfig_{t} = V$ or go extinct with~$\RandomConfig_{t}=\emptyset$. The \emph{fixation probability}~$\FP_{\FitnessG}(\SeedSet)$ is the probability that mutants fixate on a fitness graph~$\FitnessG = (G, (\FitnessM, \FitnessR))$ with seed set~$\SeedSet$. {The complexity of computing~$\FP_{\FitnessG}(\SeedSet)$ is an open question}, even for the Standard Moran process, in contrast to cascade spread models, for which the spread function is efficiently approximable~\cite{Svitkina2011}; as we prove in the next section, on mutant-biased, undirected fitness graphs, $\FP_{\FitnessG}(\SeedSet)$ is approximable efficiently via Monte Carlo simulations.

\Paragraph{The seed-selection problem} 
The standard optimization question in invasion processes is optimal seed placement:~\emph{given a budget $k$, which $k$ nodes $\SeedSet^*$ should initiate the mutant invasion so as to maximize the fixation probability?}
\begin{align}\label{eq:objective}
\SeedSet^*=\argmax_{\SeedSet \subseteq V, |\SeedSet|\leq \SizeConst}\FP_{\FitnessG}(\SeedSet).
\end{align}

The optimal seed depends on the graph structure,  budget $\SizeConst$, and node fitnesses.
\cref{fig:bibartite_undirected} showcases this intricate relationship, even with all residents having fitness~$1$. The optimal seed $\SeedSet^*$ may comprise
(i)~only nodes of the \emph{largest} mutant fitness ($\SizeConst=3$, left),
(ii)~nodes of both large and small mutant fitness ($\SizeConst=3$, middle), or
(iii)~only nodes of the smallest mutant fitness ($\SizeConst=3$, right).
Moreover, increasing~$\SizeConst$ may yield an optimal seed set that is not a superset of, or even disjoint to, the previous one; (left, $\SizeConst=1$ vs~$\SizeConst=3$).

\begin{figure}[!ht]
\includegraphics[width=0.475\textwidth]{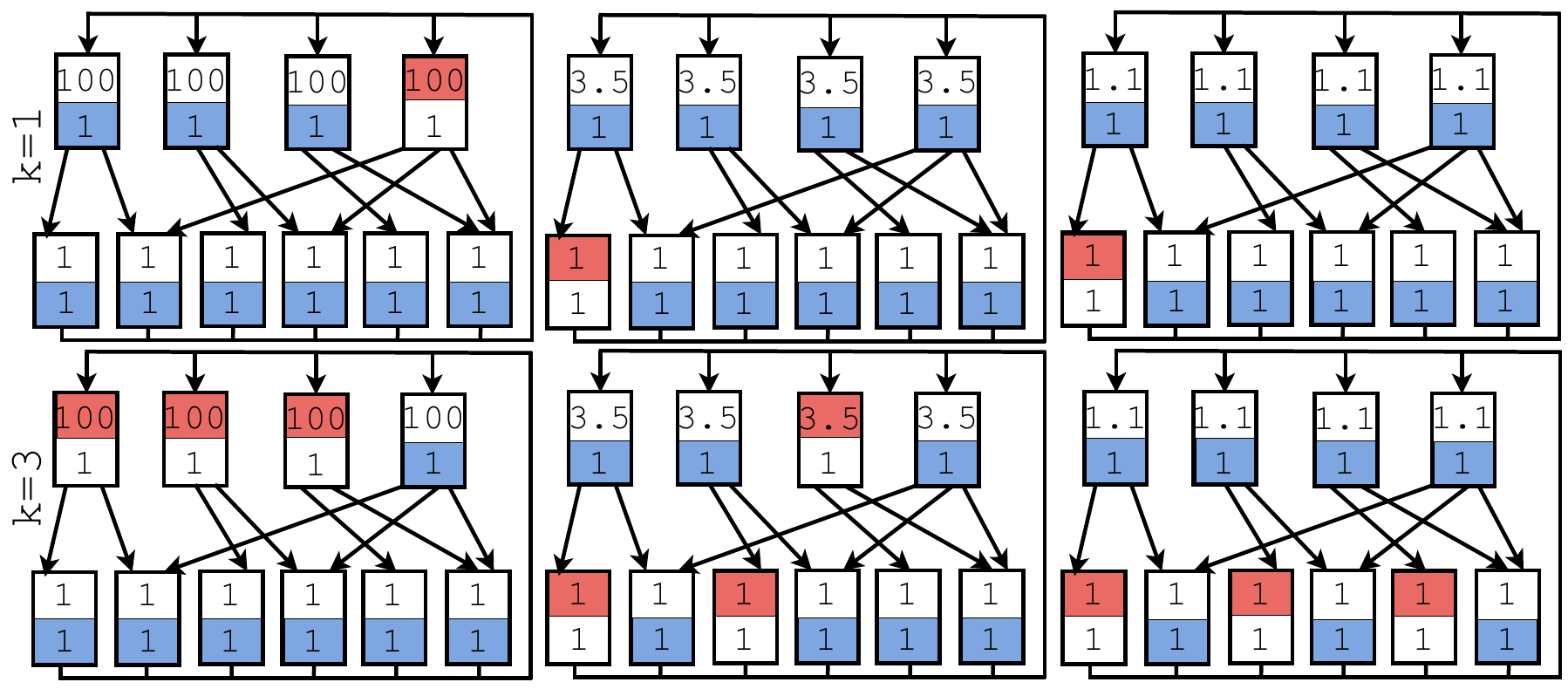}
\caption{\label{fig:bibartite_undirected}
Optimal seed set~$S^*$ (in red) while varying the mutant fitness and seed size~$\SizeConst$; all residents have fitness 1.
}
\end{figure}
%----------------------------------------------------------------------------
%COMPUTING THE FIXATION PROBABILITY
%----------------------------------------------------------------------------
\section{Computing the Fixation Probability}\label{sec:computation}

In the neutral setting ($\FitnessM(u)=\FitnessR(u),\, \forall u$), the fixation probability is linear, $\FP_\FitnessG(\SeedSet) = \sum_{u\in \SeedSet} \FP_\FitnessG(\{u\})$. 
When the graph is also undirected, $\FP_\FitnessG(\SeedSet) = \frac{\sum_{u \in \SeedSet} \sfrac1\degree(u)}{\sum_{v\in V} \sfrac1\degree(v)}$, where~$\degree(x)$ is the degree of  $x$~\cite{broom2010two}. No closed-form solution is known for the non-neutral setting.
Yet on undirected graphs the expected time until convergence is polynomial, yielding a fully polynomial-time randomized approximation scheme (FPRAS) via Monte Carlo simulations~\cite{diaz2014approximating,brendborg2022fixation}.
The next lemma generalizes this result to the Heterogeneous process on mutant-biased graphs, in sharp contrast to non-biased graphs, on which the expected time is exponential in general~\cite{svoboda2023coexistence}.

\begin{restatable}{lemma}{lemexpectedtime}\label{lem:expected_time}
Given an undirected and mutant-biased fitness graph $\FitnessG$ and a seed set $\SeedSet\subseteq V$, the expected time to convergence $T(\FitnessG,\SeedSet)$ satisfies $T(\FitnessG,\SeedSet)\leq\big(n^2\cdot\frac{\MaxMutantFitness}{\MinResidentFitness}\big)^3$.
\end{restatable}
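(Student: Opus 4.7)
My plan is to follow the quadratic-variation strategy that has been used for the Standard Moran process on undirected graphs~\cite{diaz2014approximating,brendborg2022fixation}, tuned to the heterogeneous mutant-biased setting. The key is to exhibit a potential $\phi$ on configurations that is a submartingale under the dynamics (thanks to the mutant-biased assumption), and whose squared per-step jumps are bounded below on non-absorbing configurations. Doob's optional stopping applied to $\phi^2$ then converts the range of $\phi$ into a hitting-time bound on the absorption time $\tau$.

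Concretely, I would take
\begin{equation*}
\phi(\Config) \;=\; \sum_{u\in \Config}\frac{\FitnessR(u)}{\degree(u)}.
\end{equation*}
Using $\degree(u)\geq 1$ and the mutant-biased inequality $\FitnessR(u)\leq\FitnessM(u)\leq \MaxMutantFitness$, this potential lies in $[0,n\MaxMutantFitness]$, and satisfies $\phi(\Config)\geq \MinResidentFitness/n$ whenever $\Config\neq \emptyset$. A direct bookkeeping of the contributions of each boundary edge $\{u,v\}$ (with $u\in \Config$, $v\notin \Config$) in the two reproduction directions gives
\begin{equation*}
\Expectation\bigl[\phi(\RandomConfig_{t+1})-\phi(\RandomConfig_t)\mid \RandomConfig_t=\Config\bigr] \;=\; \frac{1}{\TotalFitness_{\Config}}\sum_{\{u,v\}\in \partial \Config,\,u\in \Config}\frac{\FitnessR(v)(\FitnessM(u)-\FitnessR(u))}{\degree(u)\,\degree(v)},
\end{equation*}
where $\partial \Config$ denotes the boundary edge set. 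The weight $\FitnessR(u)/\degree(u)$ is precisely chosen so that the mutant-biased inequality $\FitnessM(u)\geq\FitnessR(u)$ forces every summand to be non-negative, making $\phi$ a submartingale. Furthermore, whenever the state changes at step $t$, exactly one node $v$ flips type, hence $|\Delta\phi|=\FitnessR(v)/\degree(v)\geq \MinResidentFitness/n$.

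The identity $\Expectation[\phi_{t+1}^2-\phi_t^2\mid \mathcal{F}_t] = \Expectation[(\Delta\phi)^2\mid \mathcal{F}_t] + 2\phi_t\Expectation[\Delta\phi\mid \mathcal{F}_t]$ has both summands non-negative (the second by submartingale plus non-negativity), so $\phi^2$ is itself a submartingale. Doob's optional stopping then bounds the total expected quadratic variation by $\Expectation[\phi_\tau^2]\leq (n\MaxMutantFitness)^2$. For the per-step lower bound, on any non-absorbing $\Config$ I would pick any single boundary edge and use $\TotalFitness_\Config\leq n\MaxMutantFitness$, $\FitnessM(u),\FitnessR(v)\geq \MinResidentFitness$, and $\degree\leq n$ to lower-bound the probability of a state change by $2\MinResidentFitness/(n^2\MaxMutantFitness)$; combined with the minimum squared jump $(\MinResidentFitness/n)^2$, this yields $\Expectation[(\Delta\phi)^2\mid \RandomConfig_t=\Config]\geq 2\MinResidentFitness^3/(n^4\MaxMutantFitness)$ on every non-absorbing $\Config$. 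Summing over $t<\tau$ via the Doob decomposition of $\phi^2$ and rearranging produces $\Expectation[\tau]\leq (n^2\MaxMutantFitness/\MinResidentFitness)^3$.

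The main obstacle is guessing the potential. The naive choice $\phi=\sum_{u\in \Config}1/\degree(u)$, which works for the Standard Moran process with a uniform fitness advantage, does not give an edge-wise non-negative drift in the heterogeneous case: one can have $\FitnessR(v)>\FitnessM(u)$ across an edge even when both endpoints are individually mutant-biased. Weighting each node by $\FitnessR(u)/\degree(u)$ is the unique rescaling (up to scalars) that balances the two reproduction directions along every edge. Once this potential is fixed the expected drift may still vanish on purely neutral regions, but the second-moment argument sidesteps this degeneracy uniformly, so no separate neutral-case treatment is needed.
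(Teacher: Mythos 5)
Your proposal is correct and follows essentially the same route as the paper: a node-weighted potential that becomes a submartingale exactly because of mutant bias, a uniform lower bound on the squared per-step jump on non-absorbing configurations, and a second-moment/optional-stopping (additive drift) argument turning the potential's range $O(n\MaxMutantFitness)$ and the variance floor $\Omega(\MinResidentFitness^3/(n^4\MaxMutantFitness))$ into the stated time bound. The only difference is cosmetic --- the paper weights node $u$ by $\FitnessM(u)/\degree(u)$ rather than $\FitnessR(u)/\degree(u)$, so the edge-wise drift term is $\FitnessM(u)(\FitnessM(v)-\FitnessR(v))$ instead of your $\FitnessR(v)(\FitnessM(u)-\FitnessR(u))$ --- which also shows your side remark that $\FitnessR(u)/\degree(u)$ is the \emph{unique} workable rescaling is false, though this does not affect the proof.
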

\begin{proof}
For a configuration $\Config$, we define the potential function $\Phi(X)=\sum_{u\in \Config}\frac{\FitnessM(u)}{\degree(u)}$, where $\degree(u)\geq 1$ is the degree of $u$.
Note that $\Phi(X)\leq n\cdot \FitnessM_{\max}$.
We let $\Delta_t=\Phi(\RandomConfig_{t+1})-\Phi(\RandomConfig_{t})$ be the potential difference in step $t$.
In addition, let $\RandomConfig_t=\Config$, and $R=\{ (u,v)\in E \colon u\in \Config \text{ and } v\not \in \Config \}$ be the set of  edges in $\Config$ with one endpoint being mutant and the other being resident.
Moreover, denote $F=\sum_{u\in V}\Fitness_{\Config}(u)$ as the total population fitness in $\Config$. 
Given a pair $(u,v)\in R$, let $p_{u\to v}$ be the probability that $u$ reproduces and replaces $v$.
First we show that $\Expectation(\Delta_t)\geq 0$, i.e., in expectation, the potential function increases in each step:
\begin{align*}
\Expectation(\Delta_t)&=\sum_{(u,v)\in R} \left(p_{u\to v}\cdot \frac{\FitnessM(v)}{\degree(v)} - p_{v\to u} \cdot \frac{\FitnessM(u)}{\degree(u)}\right)\\
&=\sum_{(u,v)\in R} \left(\frac{\FitnessM(u)}{F}\frac{1}{\degree(u)}\frac{\FitnessM(v)}{\degree(v)}-\frac{\FitnessR(v)}{F}\frac{1}{\degree(v)}\frac{\FitnessM(u)}{\degree(u)}\right)\\
&=\sum_{(u,v)\in R} \frac{\FitnessM(u)(\FitnessM(v)-\FitnessR(v))}{{d(u)d(v)}F}\geq 0
\end{align*}
as $\FitnessM(v)\geq \FitnessR(v)\geq\MinResidentFitness$ since $\FitnessG$ is mutant-biased.
Second, we give a lower bound on the variance of $\Delta_t$ when $\emptyset\subset \Config\subset V$, and thus there exists an edge $(u,v)\in R$.
First, we have 
\begin{linenomath*}
\begin{align*}
p_{v\to u} = \frac{\FitnessR(v)}{F} \frac{1}{\degree(v)}\geq \frac{\MinResidentFitness}{n\cdot\MaxMutantFitness}\frac{1}{n} =\frac{\MinResidentFitness}{n^2\cdot\MaxMutantFitness}
\end{align*}
\end{linenomath*}
while the potential function changes by $\Delta_t\leq-\frac{\FitnessM(u)}{d(u)}$.
Therefore, $\Prob\left[\Delta_t\leq \sfrac{-\FitnessM(u)}{d(u)}\right]\geq\frac{\MinResidentFitness}{n^2\cdot\MaxMutantFitness}$, and
\begin{linenomath*}
\begin{align*}
     \Variance(\Delta_t)&\geq \Prob\left[\Delta_t\leq -\frac{\FitnessM(u)}{d(u)}\right]\cdot\left(-\frac{\FitnessM(u)}{d(u)}-\Expectation(\Delta_t)\right)^2\\
    &\geq\frac{\MinResidentFitness}{n^2\cdot\MaxMutantFitness}\left(-\frac{\MinResidentFitness}{n}\right)^2=\frac{\MinResidentFitness^3}{n^4\cdot\MaxMutantFitness}.
\end{align*}
\end{linenomath*}
The potential $\Phi$ gives rise to a submartingale with upper bound $B=n\cdot\MaxMutantFitness$.
The re-scaled function $\Phi(\Phi-2B)+B^2$ satisfies the conditions
of the upper additive drift theorem~\cite{Kotz2019} with initial value at most $B^2$ and step-wise drift at least $\Variance(\Delta_t)$. 
We thus arrive at
\begin{align*}
T(G,S) \leq \frac{B^2}{\Variance(\Delta_t)} = \frac{n^2\cdot \MaxMutantFitness^2}{\frac{\MinResidentFitness^3}{n^4\cdot\MaxMutantFitness}}
\leq \frac{n^6\cdot \MaxMutantFitness^3}{\MinResidentFitness^3}.
\end{align*}
\end{proof}

\cref{lem:expected_time} yields an FPRAS for the fixation probability when mutant and resident fitnesses are polynomially (in $n$) related.

\begin{corollary}\label{cor:fpras}
Given a mutant-biased undirected fitness graph $\FitnessG$ with ${\MaxMutantFitness}/{\MinResidentFitness}=n^{O(1)}$ and a seed set $\SeedSet\subseteq V$,
the fixation probability $\FP_{\FitnessG}(\SeedSet)$ admits an FPRAS.
\end{corollary}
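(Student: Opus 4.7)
The plan is to build a Monte Carlo estimator by simulating the Heterogeneous Moran process from the seed set $\SeedSet$ repeatedly and returning the empirical fraction of runs that fixate. To turn this into an FPRAS, three ingredients must combine: (a) each simulation runs in polynomially many steps with high probability, (b) the target $p := \FP_{\FitnessG}(\SeedSet)$ is bounded below by $1/\Poly(n)$, and (c) Chernoff then yields a polynomial sample complexity for a multiplicative $(1\pm\varepsilon)$ estimate.

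For (a), \cref{lem:expected_time} gives $\Expectation[T] \le (n^2\MaxMutantFitness/\MinResidentFitness)^3$, which is $\Poly(n)$ under the hypothesis $\MaxMutantFitness/\MinResidentFitness = n^{O(1)}$. By Markov's inequality, truncating each trajectory at $L = n^{O(1)}\cdot\Expectation[T]/\varepsilon$ steps introduces only an additive bias of order $\varepsilon/\Poly(n)$ into the estimator, which will be absorbed into a multiplicative $\varepsilon$ once (b) is in hand.

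For (b), the plan is to couple the Heterogeneous process on $\FitnessG$ with the neutral Moran process on the same underlying graph $G$ (i.e., $\FitnessM \equiv \FitnessR \equiv 1$) so that at every time the mutant set of the heterogeneous chain pointwise contains that of the neutral chain. Since $\FitnessG$ is mutant-biased, at each shared interface edge the conditional probability of a mutant-producing birth is at least as large in the heterogeneous chain as in the neutral chain, while the probability of a mutant-erasing birth is no larger. The closed-form neutral formula recalled at the start of \cref{sec:computation}, namely $\FP^{\mathrm{neu}}_G(\SeedSet) = (\sum_{u\in\SeedSet}1/\degree(u))\big/(\sum_{v\in V}1/\degree(v)) \ge 1/n^2$, then yields $p \ge 1/n^2$.

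For (c), with $p \ge 1/n^2$, a standard Chernoff/Hoeffding bound on $N = O\bigl(n^2\log(1/\delta)/\varepsilon^2\bigr)$ independent truncated samples guarantees that the empirical fraction $\widehat p$ satisfies $(1-\varepsilon/2)\,p \le \widehat p \le (1+\varepsilon/2)\,p$ with probability $\ge 1-\delta$. Choosing the truncation parameter from (a) so that the additive bias is at most $\varepsilon p/2 \le \varepsilon/(2n^2)$ combines with the Chernoff guarantee to give the required multiplicative $(1\pm\varepsilon)$ approximation, in total time polynomial in $n$, $1/\varepsilon$, and $\log(1/\delta)$.

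The main obstacle I expect is the coupling step in (b): the per-step birth distributions of the two chains are normalized by configuration-dependent total fitnesses, so a naive edge-by-edge coupling breaks once the two chains disagree. I plan to resolve this by coupling only the chosen directed edge at each step (rather than the full distribution), using the mutant-biased inequality $\FitnessM(u)\ge\FitnessR(u)$ to verify monotonicity edge-by-edge. As a fallback in case the coupling proves technically awkward, one can bypass it and lower bound $p$ directly by a drift argument on the potential $\Phi$ from the proof of \cref{lem:expected_time}: its nonnegative expected increment combined with the variance lower bound yields, via a standard submartingale absorption-probability estimate, that $p \ge 1/\Poly(n)$ for any nonempty $\SeedSet$.
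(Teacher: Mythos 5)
Your overall skeleton --- polynomial expected absorption time from \cref{lem:expected_time}, a $1/\Poly(n)$ lower bound on $\FP_{\FitnessG}(\SeedSet)$, and a Chernoff bound over truncated Monte Carlo runs --- is exactly the recipe the paper has in mind (it is the standard argument of Diaz et al.\ and Brendborg et al.\ that the corollary implicitly invokes). Steps (a) and (c) are fine.

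The gap is in your primary route for step (b). First, the coupling with the all-ones neutral process requires the neutral reproduction rate to be sandwiched node-by-node between the resident and mutant rates, i.e.\ $\FitnessR(u)\le 1\le \FitnessM(u)$ for every $u$; mutant-bias only gives $\FitnessM(u)\ge\FitnessR(u)$, and resident fitnesses need not be at most $1$ (nor mutant fitnesses at least $1$), and a global rescaling cannot fix this unless $\max_u\FitnessR(u)\le\min_u\FitnessM(u)$. Second, even if you instead couple against the ``neutral'' process in which both types have fitness $\FitnessR(u)$, the closed-form $\FP^{\mathrm{neu}}_G(\SeedSet)=\bigl(\sum_{u\in\SeedSet}1/\degree(u)\bigr)/\bigl(\sum_{v\in V}1/\degree(v)\bigr)$ no longer applies: that formula rests on $\sum_{u\in\Config}1/\degree(u)$ being a martingale, which fails when reproduction rates vary across nodes even type-independently. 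Your fallback, however, is the correct (and standard) argument and needs no variance input: \cref{lem:expected_time} already shows that $\Phi(\Config)=\sum_{u\in\Config}\FitnessM(u)/\degree(u)$ is a bounded submartingale with a.s.\ finite absorption time, so optional stopping gives
\begin{linenomath*}
\[
\FP_{\FitnessG}(\SeedSet)\;\ge\;\frac{\Phi(\SeedSet)}{\Phi(V)}\;\ge\;\frac{\MinResidentFitness/n}{n\cdot\MaxMutantFitness}\;=\;\frac{1}{n^{O(1)}}
\]
\end{linenomath*}
for any nonempty $\SeedSet$, using $\FitnessM(u)\ge\FitnessR(u)\ge\MinResidentFitness$ and the hypothesis $\MaxMutantFitness/\MinResidentFitness=n^{O(1)}$. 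Replace the coupling with this one-line optional-stopping bound and the proof is complete.
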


%----------------------------------------------------------------------------
%HARDNESS OF OPTIMIZATION
%----------------------------------------------------------------------------
\section{Hardness of Optimization}\label{sec:hardness}

Here we turn our attention to the seed selection problem, and prove two hardness results.
First, we show that on arbitrary graphs, for any $0<\varepsilon<\sfrac{1}{2}$, it is $\NP$-hard to distinguish between graphs that achieve maximum fixation probability at most $\varepsilon$ and at least $1-\varepsilon$.
This is in sharp contrast to standard cascade models of influence spread, for which the optimal spread can be efficiently approximated~\cite{kempe2003maximizing}.
Then we focus on mutant-biased graphs, and show that achieving the maximum fixation probability remains $\NP$-hard even in this restricted setting.

Our reduction is from the $\NP$-hard problem Set Cover~\cite{karp1972reducibility}.
Given an instance $(\Universe, \Sets, \SizeConst)$, where~$\Universe$ is a universe, $\Sets$ a set of subsets of~$\Universe$, and~$\SizeConst$ a size constraint, the task is to decide whether there exist~$\SizeConst$ subsets in~$\Sets$ that cover~$\Universe$.
Wlog, $\Universe=\bigcup_{A\in \Sets}A$.
We construct a fitness graph $\,\FitnessG=(G,(\FitnessM,\FitnessR))$ where $G = (V, E, \Weight)$ is a bipartite graph with two parts $V = V_1 \cup V_2$ with $V_1 = \Sets$ and $V_2 = \Universe$, and define the edge relation as 
$
E = \{(u,v) \in V_1 \times V_2 \colon v \in u\} \cup (V_2 \times V_1)
$
i.e., there is an edge~$(u, v)$ for each element~$v$ of~$\Universe$ that appears in the set~$u$ of $\Sets$, as well as all possible edges from~$V_2$ to~$V_1$.
The weight function is uniform:~$\Weight(u,v)=1/\degree(u)$ for each $(u,v)\in E$.
The resident fitness is $\FitnessR(u)=1$ for all $u\in V$.
The mutant fitness is parametric on two values $x\geq 1$ and $y\leq 1$ to be fixed later, with $\FitnessM(u)=x$ if $u\in V_1$ and $\FitnessM(u)=y$ if $u\in V_2$.
See \cref{fig:set-cover} for an illustration.

Our construction guarantees upper and lower bounds on the fixation probability depending on whether the seed set forms a set cover of $(\Universe, \Sets)$, as stated in the following lemma.

\begin{restatable}{lemma}{lemhardnessmain}\label{lem:hardness_main}
The following assertions hold.
\begin{compactenum}
\item\label{item:hardness_main_upper} If \emph{$\SeedSet$ is not a set cover}, then $\FP_{\FitnessG}(\SeedSet) \leq 1-\left(\frac{\sfrac{1}{n}}{\sfrac{1}{n}+(n-1)y}\right)^{n}$.
\item\label{item:hardness_main_lower} If \emph{$\SeedSet$ is a set cover}, then
\begin{linenomath*}
\begin{align*}
\FP_{\FitnessG}(\SeedSet) \geq 
\left(\frac{\frac{y}{n^2}\left(\frac{\sfrac{x}{n}}{\sfrac{x}{n}+n}\right)^{n}}{1-\left(1-\frac{y}{n^2}\right)\left(\frac{\sfrac{x}{n}}{\sfrac{x}{n}+n}\right)^{n}}\right)^n. 
\end{align*}
\end{linenomath*}
\end{compactenum}
\end{restatable}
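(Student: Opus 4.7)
The plan is to analyze each case by identifying an explicit trajectory of the stochastic process whose probability can be bounded directly, exploiting (i) the bipartite structure of $\FitnessG$, (ii) the uniform edge weights which satisfy $\Weight(u,\cdot)\geq 1/n$, and (iii) the uniform bounds $F=\sum_{v}\Fitness_{\Config}(v)\in[n\cdot\MinResidentFitness,\,n\cdot\MaxMutantFitness]$ on the total population fitness at any configuration $\Config$.

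For part (1), I would first fix an element $v^*\in V_2$ that is not covered by $\SeedSet\cap V_1$; such a $v^*$ exists because $\SeedSet$ is not a set cover. Every incoming neighbor of $v^*$ lies in $V_1$ and contains $v^*$, and all such neighbors are residents at time $0$. For the process to fixate, $v^*$ must eventually become mutant, which in turn requires some $V_1$-neighbor $u\ni v^*$ to first become mutant (necessarily via a mutant in $V_2$ reproducing onto $u$) and only then reproduce onto $v^*$. I would therefore bound the fixation probability from above by $1-\Pr[\text{all mutants die before }v^*\text{ flips}]$ and lower-bound the latter by an explicit killing sequence of length $n$. In each such step, conditional on the current configuration, I would argue that the probability that a resident of $V_2$ (fitness $1$) wins the next birth against the $V_2$-mutants present (combined fitness at most $(n-1)y$) and then picks a specific $V_1$-mutant via its uniform outgoing weight $1/|V_1|\geq 1/n$ is at least $\frac{1/n}{1/n+(n-1)y}$. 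Chaining $n$ such killings eradicates all mutants, yielding the claimed extinction lower bound and hence the fixation upper bound.

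For part (2), if $\SeedSet$ covers $\Universe$, then every $v\in V_2$ has at least one mutant neighbor in $\SeedSet\cap V_1$ at time $0$, so each element of $V_2$ is directly reachable from a mutant. I would construct a favorable trajectory in $n$ phases, each of which converts one further node to mutant and keeps it so. Within a single phase, the relevant quantity is the conditional probability that a designated beneficial event---a specific $V_1$-mutant reproducing onto a chosen $V_2$-resident, of probability at least $y/n^2$ given the fitness and weight bounds above---occurs before a ``bad'' erosion of the mutant set, which I would upper-bound by $q:=\bigl(\frac{x/n}{x/n+n}\bigr)^n$ for a suitable sequence of $n$ adverse micro-steps. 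A standard renewal-style sum over the number of unsuccessful waits before the first success then yields the factor $\frac{(y/n^2)\,q}{1-(1-y/n^2)\,q}$, and raising to the $n$-th power accounts for iterating the argument across all $n$ phases.

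The main obstacle in both parts is maintaining uniform, configuration-independent one-step bounds over an exponentially large state space. The bipartite structure of $\FitnessG$ handles this cleanly: every reproduction probability factors through $F$ and a uniform weight at least $1/n$, so the key terms $1/n$, $(n-1)y$, $y/n^2$, and $x/n$ arise naturally. Care is still needed to ensure that the killing targets in part (1) always exist at the killing time (enforced by selecting them \emph{after} the current state is revealed) and that the favorable phases in part (2) compose into a global fixation event, but no additional ingredients are required beyond those outlined above.
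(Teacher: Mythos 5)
Your treatment of Item~1 follows the paper's argument essentially step for step: fix an uncovered resident $v^*\in V_2$, and chain $n$ ``killing'' events, each of which wins a race against the $V_2$-mutants with conditional probability at least $\frac{\sfrac{1}{n}}{\sfrac{1}{n}+(n-1)y}$. The only point to make explicit is that the last $|V_2|-1$ killings are of a different kind (a $V_1$-\emph{resident} reproducing onto a $V_2$-mutant, rather than $v^*$ reproducing onto a $V_1$-mutant); the same ratio bounds both races, so this is a presentational gap only.

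Item~2 is where there is a genuine problem. Your final formula is the right one, but the two probability factors are attached to the wrong elementary events, and the phase structure as you describe it would not produce them. You designate the ``beneficial event'' of probability $\sfrac{y}{n^2}$ as ``a specific $V_1$-mutant reproducing onto a chosen $V_2$-resident.'' That event is governed by the fitness $x$ of $V_1$-mutants (which the reduction later makes huge), not by $y$; it is in fact the \emph{high}-probability ingredient, and repeating it $|V_2|$ times is exactly what yields the factor $\left(\frac{\sfrac{x}{n}}{\sfrac{x}{n}+n}\right)^{n}$ --- which you instead present as an upper bound on ``bad erosion.'' The factor $\sfrac{y}{n^2}$ must come from the opposite direction: once all of $V_2$ is mutant, progress toward fixation requires a \emph{low}-fitness $V_2$-mutant (fitness $y\leq 1$) to convert a $V_1$-resident before that resident reclaims a node of $V_2$, and the conditional probability of winning that race is at least $\frac{\sfrac{y}{n}}{\sfrac{y}{n}+|V_1|}\geq \sfrac{y}{n^2}$. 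Relatedly, your unit of progress (``each phase converts one further node to mutant and keeps it so'') is not the right invariant: the $V_2$-conversions are repeatedly undone inside a phase, and the quantity that only needs to increase at most $n$ times is the number of mutants in $V_1$. The correct decomposition (the paper's) is: from a covering configuration, with probability $p^*\geq\left(\frac{\sfrac{x}{n}}{\sfrac{x}{n}+n}\right)^{n}$ the $V_1$-mutants flood all of $V_2$ without losing any $V_1$-mutant; from there, with probability $q\geq\sfrac{y}{n^2}$ a $V_2$-mutant converts one more $V_1$-resident, and otherwise the process returns to a covering configuration and the attempt restarts; the two-state absorption computation gives $\frac{q\,p^*}{1-(1-q)p^*}$ per gained $V_1$-mutant, and iterating at most $n$ times gives the bound. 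With the events re-identified this way your renewal-sum skeleton goes through; as written, it does not.
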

Before we prove \cref{lem:hardness_main}, 
we show 
how \cref{lem:hardness_main} leads to the two hardness results of this section.
\begin{restatable}{theorem}{thmhardnessgeneral}\label{thm:hardness_general}
For any $0<\varepsilon< \sfrac{1}{2}$, it is $\NP$-hard to distinguish between instances with~$\max_{\SeedSet}\FP_{\FitnessG}(\SeedSet)\leq\varepsilon$ and those with~$\max_{\SeedSet}\FP_{\FitnessG}(\SeedSet)> 1-\varepsilon$.
\end{restatable}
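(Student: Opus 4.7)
The plan is to derive \cref{thm:hardness_general} directly from \cref{lem:hardness_main} via the Set-Cover reduction constructed just before the lemma, by tuning the two parameters $x \ge 1$ and $y \le 1$ so that the two bounds in the lemma straddle the target window $(\varepsilon, 1-\varepsilon)$. Concretely, given a Set-Cover instance $(\Universe, \Sets, \SizeConst)$ and a fixed $\varepsilon \in (0, \sfrac{1}{2})$, I build $\FitnessG$ exactly as in the section and reduce the question ``is there a cover of size $\SizeConst$?'' to ``is $\max_{|\SeedSet|\le k}\FP_{\FitnessG}(\SeedSet) > 1-\varepsilon$?''.

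First I would fix $y$ small enough to force the non-cover bound below $\varepsilon$. Since $\frac{1/n}{1/n+(n-1)y} = \frac{1}{1+n(n-1)y}$, using $(1+z)^n \le 1/(1-nz)$ for $nz<1$ gives $(1+n(n-1)y)^n \le 1/(1-n^2(n-1)y)$; choosing, say, $y := \varepsilon/n^3$ then yields $1-\bigl(\tfrac{1/n}{1/n+(n-1)y}\bigr)^n \le \varepsilon$, so by \cref{lem:hardness_main}(\ref{item:hardness_main_upper}) every non-cover seed set has $\FP_{\FitnessG}(\SeedSet) \le \varepsilon$.

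Next, with $y$ fixed, I would pick $x$ large enough to push the cover bound above $1-\varepsilon$. Writing $\alpha := (x/(x+n^2))^n$ and $\beta := 1-\alpha$, the inner fraction in \cref{lem:hardness_main}(\ref{item:hardness_main_lower}) equals $1 - \frac{\beta}{\beta + (y/n^2)(1-\beta)}$. The elementary estimate $\beta \le n^3/x$ (from $(1-n^2/(x+n^2))^n \ge 1 - n^3/(x+n^2)$) together with $(1-\varepsilon)^{1/n} \le 1 - c\varepsilon/n$ shows that it suffices to make $\beta \ll y/n^2$; choosing $x := n^9/\varepsilon^2$ (or any polynomial majorant) delivers the required $\FP_{\FitnessG}(\SeedSet) > 1-\varepsilon$ whenever $\SeedSet$ is a cover. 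Both $x$ and $y$ have bit-length polynomial in $n$ and $\log(1/\varepsilon)$, so for fixed $\varepsilon$ the reduction is polynomial-time; an algorithm distinguishing the two regimes therefore decides Set Cover, which is $\NP$-hard.

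The main obstacle I anticipate is verifying that \cref{lem:hardness_main}(\ref{item:hardness_main_upper}) genuinely applies to \emph{every} non-cover seed set of size $\SizeConst$, including those that spend part of their budget on the universe side $V_2$ rather than on $V_1 = \Sets$. I would address this by reading the lemma's statement literally (``$\SeedSet$ is not a set cover'' meaning the projection of $\SeedSet$ onto $V_1$ does not cover $\Universe$) and, if necessary, adding a short exchange argument: because mutant fitness on $V_2$ is $y \ll 1$ while on $V_1$ it is $x \gg 1$, any seed placed on a $V_2$ node is dominated by moving it to a $V_1$ node that still leaves the cover incomplete, so without loss of generality non-cover optimal seeds lie in $V_1$ and the bound applies.
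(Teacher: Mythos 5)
Your proposal is correct and follows essentially the same route as the paper: instantiate the Set-Cover construction, choose $y$ polynomially small so the non-cover bound of \cref{lem:hardness_main} drops below $\varepsilon$, and $x$ polynomially large so the cover bound exceeds $1-\varepsilon$ (the paper lands on $y=1/O(n^3)$ and $x=O(n^{10})$ via two small auxiliary lemmas, whereas you use Bernoulli-type estimates directly, but the argument is the same). Your worry about seeds placed in $V_2$ is already resolved inside the paper's proof of \cref{lem:hardness_main}, Item~1, which explicitly handles an uncovered $v\in V_2$ that happens to lie in $\SeedSet$, so no exchange argument is needed.
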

\begin{proof}[Proof sketch]
We solve the inequalities of \cref{lem:hardness_main}, and obtain that there exist $y=1/O(n^3)$ and $x=O(n^{10})$ satisfying them. As both values are polynomial in $n$, this completes a polynomial reduction from Set Cover to seed selection.
\end{proof}

\begin{figure}[!t]
\includegraphics[width=0.444\textwidth]{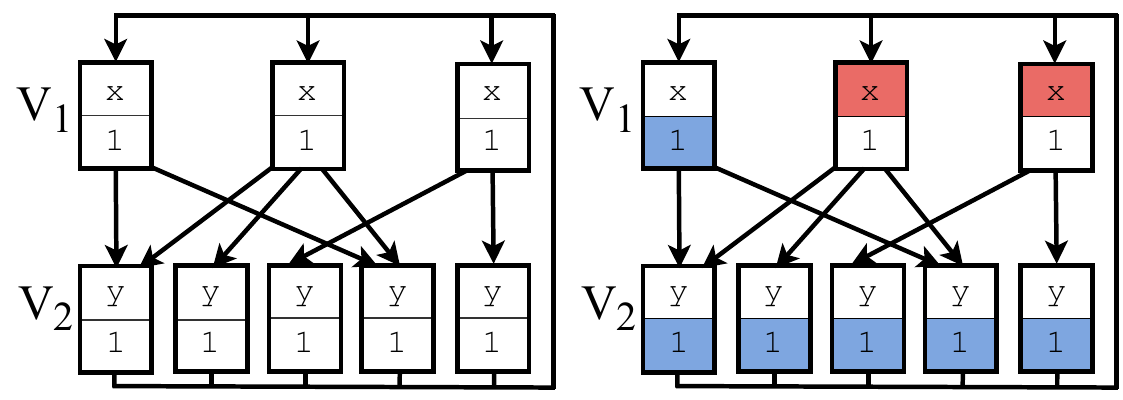}
\caption{\pk{\emph{(Left):}~Graph~$G$ for a Set Cover instance with~$\Universe=\{1,2,3,4,5\}$ and~$\Sets=\{ \{1,4\}, \{1,2,4\}, \{3,5\} \}$.
\emph{(Right):}~For~$\SizeConst=2$, the optimal seed set forms a Set Cover.}}\label{fig:set-cover}
\end{figure}

\begin{restatable}{theorem}{thmhardnessmutantbiased}\label{thm:hardness_mutant_biased}
%In the class of 
For mutant-biased fitness graphs, it is $\NP$-hard to distinguish between instances with~$\max_{\SeedSet}\FP_{\FitnessG}(\SeedSet)\leq 1-1/(n^{2n})$ and those with~$\max_{\SeedSet}\FP_{\FitnessG}(\SeedSet)> 1-1/(n^{2n})$.
\end{restatable}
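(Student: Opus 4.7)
The plan is to adapt the Set Cover reduction behind Theorem \ref{thm:hardness_general} to the mutant-biased setting. The original construction has $\FitnessM(v) = y \leq 1 = \FitnessR(v)$ on every node $v \in V_2$, violating the mutant-biased condition. I would keep the same bipartite graph $G = (V_1 \cup V_2, E)$ but redefine fitnesses so that $\FitnessM(u) \geq \FitnessR(u)$ everywhere: set $\FitnessR \equiv 1$, $\FitnessM(v) = 1$ for $v \in V_2$, and $\FitnessM(u) = x$ for $u \in V_1$, with $x$ a sufficiently large polynomial in $n$. Under this modification the fixation probability is always very close to $1$, so the reduction can exhibit only an exponentially small gap between covered and uncovered cases, which matches the threshold $1 - 1/n^{2n}$.

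The first key step is to prove an analogue of Lemma \ref{lem:hardness_main}. For a set-cover seed $\SeedSet$, every element node $v \in V_2$ has a high-fitness mutant neighbor that reproduces at rate $x$. Tracking the total fitness in a small Markov chain over initial configurations, I would argue that fixation fails with probability at most $\gamma_1 = O(1/x)$, which is much smaller than $1/n^{2n}$ once $x$ is a suitable polynomial in $n$. For a non-cover seed, some element $v^* \in V_2$ is isolated from the initial mutants, and converting $v^*$ requires the longer cascade of first infecting a $V_1$-neighbor of $v^*$ via a $V_2 \to V_1$ reproduction event. I would lower-bound the probability of extinction before this cascade starts, yielding a deficit $\gamma_2 \geq 1/n^{2n}$.

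The concrete witness for the non-cover bound is an explicit sequence of roughly $n$ consecutive resident-reproduction events, each of probability at least $1/n^2$, which together drive the configuration back to all-residents. Multiplying these probabilities gives an extinction event of total probability at least $1/n^{2n}$. Choosing $x$ as a suitably large polynomial in $n$ then guarantees $\gamma_1 < 1/n^{2n} < \gamma_2$, so covered instances satisfy $\FP_{\FitnessG}(\SeedSet) > 1 - 1/n^{2n}$ while any uncovered $\SeedSet$ of size $k$ satisfies $\FP_{\FitnessG}(\SeedSet) \leq 1 - 1/n^{2n}$, yielding a polynomial reduction from Set Cover.

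The main obstacle is the precise quantitative coupling between the two deficit bounds. The cover-case bound requires a martingale or drift argument on the size of the mutant set to show that the deficit really scales like $1/x$, since naive union bounds only give $1 - \FP = \Theta(1)$ over small initial configurations. The non-cover bound must carefully restrict to the initial phase in which the witness sequence of bad events can occur before any $V_2$-to-$V_1$ reproduction escapes the isolation of $v^*$. If both estimates hold with polynomially-sized $x$, the reduction runs in polynomial time and the $\NP$-hardness of the stated decision problem follows.
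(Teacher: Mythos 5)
Your construction is the same as the paper's (keep the Set Cover gadget, raise the $V_2$ mutant fitness to $1$ so the graph becomes mutant-biased, and push $x$ up until the covered case clears the threshold $1-1/n^{2n}$), but there is a genuine quantitative gap: a \emph{polynomial} value of $x$ cannot work, and your estimate $\gamma_1 = O(1/x) \ll 1/n^{2n}$ for polynomial $x$ is arithmetically backwards --- if $x=\Poly(n)$ then $O(1/x)$ is $1/\Poly(n)$, which is vastly \emph{larger} than $1/n^{2n}$, not smaller. Worse, the true deficit in the covered case really is only polynomially small in $x$: starting from a cover $\SeedSet\subseteq V_1$ with $|\SeedSet|=k$, the event that the first $k$ effective steps each consist of a resident in $V_2$ killing a seed mutant (after which all surviving mutants sit in $V_2$ with fitness $1$ and, e.g., can be finished off neutrally) has probability at least roughly $\bigl(\tfrac{1}{n^2 x}\bigr)^{k}$, so $1-\FP_{\FitnessG}(\SeedSet)\geq (n^2x)^{-k}$. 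For constant $k$ and polynomial $x$ this deficit is $1/\Poly(n)\gg 1/n^{2n}$, so the covered case does \emph{not} exceed $1-1/n^{2n}$ and the reduction fails. The resolution, which is exactly what the paper does via \cref{lem:hardness_main}, \cref{lem:prob_bound} and \cref{lem:log_lower_bound}, is to take $x$ super-polynomial in value, $x=2^{O(n\log n)}=n^{O(n)}$; the reduction remains polynomial-time not because $x$ is polynomially bounded but because its \emph{binary description} has polynomial length. This distinction is the whole point of the mutant-biased variant of the theorem (and is why the gap shrinks to $1/n^{2n}$ rather than the constant gap of \cref{thm:hardness_general}), and your proposal as written misses it.

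Your treatment of the non-cover side is essentially the paper's: a chain of about $n$ ``resident wins the race'' events, each of probability at least $\frac{1/n}{1/n+(n-1)}\geq 1/n^2$, giving extinction probability at least $1/n^{2n}$; that part is fine. But the covered-case analysis must be redone with $x=n^{\Theta(n)}$, propagating this through the bound of \cref{lem:hardness_main}, Item~2 (where one needs $\alpha=(x/(x+n^2))^n$ to be within about $n^{-2n-3}$ of $1$), rather than hoping a drift argument rescues a polynomial $x$.
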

\begin{proof}[Proof sketch]
We set~$y=1$ and solve the second inequality of \cref{lem:hardness_main}.
We obtain that it is satisfied by some~$x = 2^{O(n\log n)}$. The fitness graph is mutant-biased as~$\FitnessR(u)=1$ and~$\FitnessM(u)\geq 1$ for all nodes~$u$. The description of~$x$ is polynomially long in~$n$, thus we have a polynomial reduction from Set Cover to seed selection on mutant-biased graphs.
\end{proof}

We remark that the class of graphs behind \cref{thm:hardness_mutant_biased} form a special case of the Positional Moran process~\cite{brendborg2022fixation}, by setting as active nodes $\ActiveNodes=V_1$ and fitness advantage $\delta=2^{O(n\log n)}$. 
Thus, the $\NP$-hardness of \cref{thm:hardness_mutant_biased} extends to the Positional Moran process.

We now turn our attention to the proof of \cref{lem:hardness_main}.
By a small abuse of terminology, we say that a configuration $\Config$ covers $V_2$ to denote that the sets in $\Config\cap V_1$ cover $V_2$.
\cref{item:hardness_main_upper} relies on the following intermediate lemma,
which intuitively states that, starting from a configuration $\Config_1$ that contains a resident node $v\in V_2$ not covered by $\Config$,
the process loses all mutants in $V_1$ with large enough probability.

\begin{restatable}{lemma}{lemupperresident}\label{lem:upper_resident}
From any configuration $\Config_1$ with $V_2\setminus (\Config_1 \cup \{ u\in \Config_1\colon (u,v)\in E \})\neq \emptyset$,
the process reaches a configuration $\Config_2$ with $\Config_2\cap V_1=\emptyset$ with probability $p\geq\left(\frac{\sfrac{1}{n}}{\sfrac{1}{n}+(n-1)y}\right)^{|V_1|}$.
\end{restatable}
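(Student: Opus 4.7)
Let $v \in V_2$ witness the hypothesis, i.e., $v \notin \Config_1$ and no $u \in \Config_1 \cap V_1$ has $(u,v) \in E$. Because $G$ is bipartite between $V_1$ and $V_2$, the only way $v$ can become a mutant is for a mutant $V_1$-node adjacent to $v$ to reproduce into it. New $V_1$-mutants are created exclusively by a \emph{bad} event (B): a $V_2$-mutant reproduces and replaces a $V_1$-resident. Hence, conditioned on no B-event having occurred, $V_1 \cap \Config \subseteq V_1 \cap \Config_1$, none of whose members is adjacent to $v$; so $v$ remains a resident throughout.

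My plan is a race argument. Track $Y_t := |V_1 \cap \Config_t|$ and its jumps, i.e., the times when $V_1 \cap \Config$ changes. Each jump is $+1$ (exactly a B-event) or $-1$ (a $V_2$-resident reproducing into a $V_1$-mutant). If the first $m := |\Config_1 \cap V_1|$ jumps are all $-1$, then $Y$ drops from $m$ to $0$ and a configuration $\Config_2$ with $\Config_2 \cap V_1 = \emptyset$ is reached. The claim then reduces to showing that this has probability at least $\alpha^m$, where $\alpha := \tfrac{1/n}{1/n + (n-1)y}$. At any state $\Config$ visited during such a run (with $Y=k\ge 1$ and $v$ still resident),
\begin{align*}
P(+1) &\le \tfrac{(n-1)\,y\,(|V_1|-k)}{F\,|V_1|}, & P(-1) &\ge \tfrac{1}{F}\cdot\tfrac{k}{|V_1|},
\end{align*}
using $|V_2 \cap \Config| \le n-1$ (since $v \in V_2 \setminus \Config$) and the simple lower bound $P(-1) \ge P(G)$, where G denotes the event that $v$ itself reproduces into one of the $k$ current $V_1$-mutants. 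Therefore the conditional probability that the next jump is $-1$ is at least
\[
\tfrac{P(-1)}{P(-1)+P(+1)} \ge \tfrac{k/|V_1|}{k/|V_1|+(n-1)y} \ge \tfrac{1/n}{1/n+(n-1)y} = \alpha,
\]
where the last step uses $k \ge 1$ and $|V_1| \le n$.

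Applying the strong Markov property at successive jump times, the probability of $m$ consecutive $-1$ jumps is at least $\alpha^m \ge \alpha^{|V_1|}$ (since $m \le |V_1|$ and $\alpha \in (0,1]$), which is the bound in the lemma. The main point to check is that the per-jump race bound $\alpha$ holds uniformly at every intermediate state reachable during the race, even though neutral events between jumps may grow $V_2 \cap \Config$; this is where the worst-case estimate $|V_2 \cap \Config| \le n-1$ enters, and it remains valid precisely because conditioning on no prior B-event keeps $v$ in $V_2 \setminus \Config$.
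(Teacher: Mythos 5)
Your proof is correct and follows essentially the same route as the paper's: a race argument on the number of $V_1$-mutants, bounding at each jump the probability that the uncovered resident $v\in V_2$ removes a $V_1$-mutant before any $V_2$-mutant creates one, and multiplying over at most $|V_1|$ jumps. Your writeup is slightly more explicit about why the per-jump bound is uniform over intermediate configurations and why $v$ stays resident absent a $+1$ jump, but the decomposition and the final estimate are the same.
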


We can now prove the upper bound of \cref{lem:hardness_main}.

\begin{proof}[Proof sketch of \cref{lem:hardness_main}, Item~1]
First, we show that the probability $q$ of reaching configuration $\Config_1$ such that $V_2\setminus (\Config_1 \cup \{ u\in \Config_1\colon (u,v)\in E \})\neq \emptyset$ is at least $\frac{\sfrac{1}{n}}{\sfrac{1}{n}+(n-1)y}$.
Then, by using \cref{lem:upper_resident} on $\Config_1$ we derive that the process reaches a configuration $\Config_2$ with $\Config_2\cap V_1=\emptyset$ with probability at least $\left(\frac{\sfrac{1}{n}}{\sfrac{1}{n}+(n-1)y}\right)^{|V_1|}=q^{|V_1|}$.
While at configuration $\Config_2$, the process changes configuration when either a resident in $V_1$ replaces a mutant in $V_2$, or vice versa.
Recall that the probability that the first event occurs before the second is at least $q$.
Repeating the process for all mutants in $V_2\setminus\{v\}$ (as $v$ is already a resident in $\Config_2$) we arrive in a configuration without mutants in $V_2$ with probability at least $q^{|V_2|-1}$.
At this point all mutants have gone extinct, thus:
\begin{align*}
    \FP_{\FitnessG}(\SeedSet)\leq 1- q^{1+|V_1|+(|V_2|-1)}
= 1-\left(\frac{\sfrac{1}{n}}{\sfrac{1}{n}+(n-1)y}\right)^{n}.
\end{align*}
\qedhere
\end{proof}

The following lemma 
states that, starting from a configuration $\Config$ that covers $V_2$, the process makes all nodes in $V_2$ mutants without losing any mutant in $V_1$, with certain probability.

\begin{restatable}{lemma}{lemlowermutant}\label{lem:lower_mutant}
From any configuration $\Config$ that covers $V_2$,
the process reaches a configuration $\Config^*$ with $V_2\cup (\Config\cap V_1)\subseteq \Config^*$ with probability $p^*\geq\left(\frac{\sfrac{x}{n}}{\sfrac{x}{n}+n}\right)^{|V_2|}$.
\end{restatable}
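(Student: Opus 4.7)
The plan is an inductive argument under the invariant $A\subseteq\RandomConfig_t$, where $A:=\Config\cap V_1$, tracking the residents in $V_2$ as $R_t:=V_2\setminus\RandomConfig_t$ with $|R_0|\leq|V_2|$. The target configuration $\Config^*$ satisfies $V_2\cup A\subseteq\RandomConfig_t$, which is exactly $R_t=\emptyset$ together with $A\subseteq\RandomConfig_t$. I would classify each one-step transition as \emph{progress} (decreases $|R|$ while preserving $A$), \emph{setback} (either removes a mutant from $A$, or turns a mutant in $V_2$ back into a resident), or \emph{idle} (affects neither $A\cap\RandomConfig$ nor $R$). The core estimate is that at every configuration satisfying the invariant with $R\neq\emptyset$, the next non-idle event is a progress event with probability at least $q:=\frac{x/n}{x/n+n}$. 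The probability that the first $|R_0|$ non-idle events are all progress is then at least $q^{|R_0|}\geq q^{|V_2|}$, and along such a trajectory $|R|$ drops monotonically to $0$ while $A$ stays intact, so $\Config^*$ is reached.

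To derive the estimate, I would separately lower-bound the progress probability $p_g$ and upper-bound the setback probability $p_b$. For $p_g$, pick any $v\in R$; since $\Config$ covers $V_2$ and $A\subseteq\RandomConfig$, there exists $u_v\in A$ with $(u_v,v)\in E$ and $u_v$ still mutant, so the single event ``$u_v$ reproduces and replaces $v$'' contributes $\frac{\FitnessM(u_v)}{F}\cdot\frac{1}{\degree(u_v)}\geq\frac{x}{F\cdot n}$, using $\degree(u_v)\leq|V_2|\leq n$ and $F\leq nx$ (since $x\geq 1\geq y$). For $p_b$, I split setbacks into (i) ``$v\in R$ reproduces and replaces $u\in A$ along a $V_2\to V_1$ edge'', contributing $\sum_{v\in R, u\in A}\frac{1}{F\cdot|V_1|}\leq|R|/F$, and (ii) ``$w\in V_1\setminus\RandomConfig$ reproduces and replaces a mutant $v'\in V_2\cap\RandomConfig$'', contributing at most $|V_1\setminus\RandomConfig|/F$. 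Summing yields $p_b\leq n/F$, hence $p_g/(p_g+p_b)\geq x/(x+n^2)=q$. By the strong Markov property the bound holds uniformly at each non-idle event, and the iteration follows.

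The main obstacle I anticipate is the careful event bookkeeping: progress can also be effected by mutants in $V_1\cap\RandomConfig\setminus A$ replacing residents in $V_2$ (``auxiliary'' progress, which only helps the bound), while setbacks must include both the direct attack on $A$ from $R$ and the erosion of $V_2\cap\RandomConfig$ by residents in $V_1\setminus A$. The clean route is to lower-bound $p_g$ using only the one fixed event involving $u_v\in A$, thereby sidestepping the auxiliary progress contributions, and to upper-bound $p_b$ by enumerating the few setback mechanisms, exploiting the uniform out-edges $V_2\to V_1$ and the incidence structure of the $V_1\to V_2$ edges.
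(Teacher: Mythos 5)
Your proof is correct and takes essentially the same route as the paper's: lower-bound the single-step probability that a covering mutant in $V_1$ converts a resident of $V_2$ by $\frac{x}{F\cdot n}$, upper-bound the competing harmful events by $\frac{n}{F}$ (residents of $V_2$ attacking $V_1$ plus residents of $V_1$ attacking $V_2$), conclude that the next relevant event is favorable with probability at least $\frac{\sfrac{x}{n}}{\sfrac{x}{n}+n}$, and iterate at most $|V_2|$ times while observing that the remaining configuration-changing events preserve the covering invariant. Your finer bookkeeping (restricting setbacks to the set $A$ and bounding them by $\sfrac{|R|}{F}$) is a harmless refinement that lands on the same bound.
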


We can now prove the lower bound of \cref{lem:hardness_main}.

\begin{proof}[Proof sketch of \cref{lem:hardness_main}, Item~2]
We consider $4$ configurations; 
any configuration $\Config$ that covers~$V_2$;
$\Config^{-}$ with less mutants in $V_1$ than $\Config$;
$\Config^*$ with same mutants in $V_1$ with $\Config$ and all nodes in $V_2$ being mutants;
and
$\Config^{+}$ starting from $\Config^*$ includes at least one more mutant in $V_1$.
The Markov chain in \cref{fig:mc_np_hard} captures this process where states~$S_1$, $S_2$, $S_3$ and~$S_4$ denote that the process is in configurations~$\Config^{-}$, $\Config$, $\Config^*$ and~$\Config^+$, respectively. 
To prove the Lemma, we first bound the transition probabilities of Markov chain in \cref{fig:mc_np_hard}.
We prove that $p^+\geq \frac{\frac{y}{n^2}\left(\frac{\sfrac{x}{n}}{\sfrac{x}{n}+n}\right)^{n}}{1-\left(1-\frac{y}{n^2}\right)\left(\frac{\sfrac{x}{n}}{\sfrac{x}{n}+n}\right)^{n}}$, 
$p^*\geq\Big(\frac{\sfrac{x}{n}}{\sfrac{x}{n}+n}\Big)^{n}$ and $q\geq \frac{y}{n^2}$.
Note that $p^+$ is lower-bounded by the probability that a random walk starting in~$S_2$ (i.e., $\Config$) gets absorbed in~$S_4$ (i.e., $\Config^+$). 
Let~$x_i$ be the probability that a random walk starting in~$S_i$ gets absorbed in~$S_4$. 
We have
%\[x_2 = p^*\cdot x_3 + (1-p^*) \cdot x_1 \qquad  x_3 = q \cdot x_4 + (1-q) \cdot x_2\] 
$x_2 = p^*\cdot x_3 + (1-p^*) \cdot x_1$
and $x_3 = q \cdot x_4 + (1-q) \cdot x_2$,
with boundary conditions~$x_1 = 0$ and~$x_4 = 1$, whence~$x_2 = \frac{q \cdot p^*}{1-(1-q) \cdot p^*}$. 
Since~$\Config \!\subset\! \Config^+$, the set $\Config^+$ also covers~$V_2$, thus the reasoning repeats for up to $n$ steps until fixation, resulting in 
$\FP_{\FitnessG}(\Config)\geq(p^+)^n$.
\end{proof}

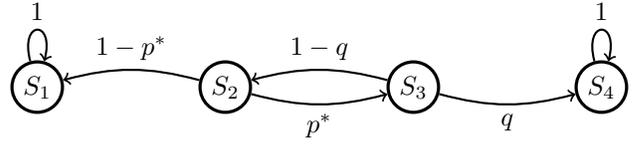
\begin{figure}
\centering
\begin{tikzpicture}
\def\xstep{2.5}
\def\bend{15}

\foreach \i in {1,...,4}{
\node[very thick, draw=black, circle, inner sep=2] (S\i) at (\i*\xstep, 0) {$S_{\i}$};
}

\draw[->, thick, bend right=\bend] (S2) to node[above]{$1-p^*$} (S1);
\draw[->, thick, bend right=\bend] (S2) to node[below]{$p^*$} (S3);

\draw[->, thick, bend right=\bend] (S3) to node[above]{$1-q$} (S2);
\draw[->, thick, bend right=\bend] (S3) to node[below]{$q$} (S4);

\draw[->, thick, loop above] (S1) to node[above]{$1$} (S1);
\draw[->, thick, loop above] (S4) to node[above]{$1$} (S4);

\end{tikzpicture}
\caption{\pk{The Markov chain for the process of \cref{lem:lower_mutant}.
%$S_2=\Config$, $S_3=\Config^\star$,$S_4=\Config^+$.
}
}
\label{fig:mc_np_hard}
\end{figure}
%----------------------------------------------------------------------------
%MONOTONICITY AND SUBMODULARITY
%----------------------------------------------------------------------------
\section{Monotonicity and Submodularity}\label{sec:submodularity}

\cref{thm:hardness_general} rules out polynomial-time algorithms for any non-trivial answer to seed selection. \cref{thm:hardness_mutant_biased} states that the problem remains $\NP$-hard for mutant-biased graphs, but it permits potential tractable approximations. Indeed, here we prove that mutant bias renders the fixation probability submodular, thus seed selection admits a constant-factor approximation. Our proofs are based on coupling arguments. Instead of applying these arguments directly to the Heterogeneous Moran process, we propose a variant, the \emph{Loopy process}, and show its equivalence to the Heterogeneous process in the sense of preserving the fixation probability.

\Paragraph{The Loopy process} 
In the Loopy process, we slightly modify the underlying fitness graph~$\FitnessG=(G,(\FitnessM, \FitnessR))$ in each step based on the current configuration~$\Config$. 
Without loss of generality, we let every node $u\in V$ have a self-loop $(u,u)\in E$, by assigning $\Weight(u,u) = 0$. Let $\Fitness_{\max}=\max_{u\in V}\{\FitnessR(u),\FitnessM(u)\}$ be the maximum fitness. When the original process is at some configuration $\Config$, different nodes reproduce at different rates. When the Loopy process is at configuration~$\Config$, we construct a fitness graph $\FitnessG_{\Config}=(G_{\Config}, (\ConstantOne, \ConstantOne))$, where $\ConstantOne$ is the constant function $u\mapsto 1$, and $G_{\Config}=(V,E,\Weight_{\Config})$ is a graph of the same structure as~$G$, but with a weight function modified by adjusting the self-loop probability of each node as follows.
\begin{align*}
\Weight_{\Config}(u,v)=
\begin{cases}
\frac{\Fitness_{\Config}(u)}{\Fitness_{\max}}\cdot\Weight(u,v),           & \text{if } u\neq v \\
1 - \frac{\Fitness_{\Config}(u)}{\Fitness_{\max}}(1-\Weight(u,v)), & \text{if } u = v
\end{cases}\numberthis\label{eq:loopyweights}
\end{align*}
where~$\Fitness_{\Config}(\cdot)$ denotes the fitness of nodes in the original process. \cref{fig:loopy} shows an instance of the Heterogeneous Moran process and its Loopy counterpart. All nodes in~$\FitnessG_{\Config}$ reproduce at equal rates as they have the same fitness regardless of type. The new weight function~$\Weight_{\Config}$ compensates for this reproduction rate uniformity:~nodes that formerly had lower fitness acquire stronger self-loops, hence the probability distribution~$\Prob[\RandomConfig_{t+1} |\RandomConfig_{t+1} \neq \RandomConfig_{t}]$, and thus the fixation probability, is identical in the two processes, as \cref{lem:loopyeq} states.

\begin{figure}[!b]
\centering
\includegraphics[width=0.46\textwidth]{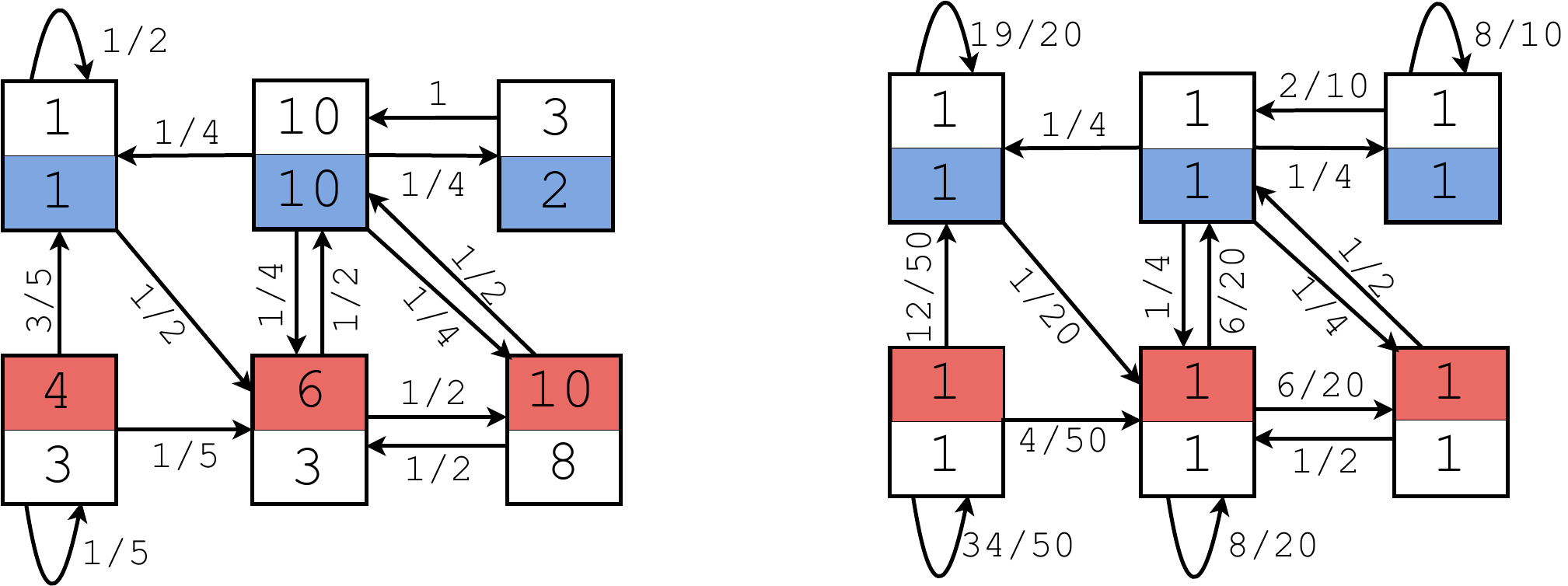}
\caption{
A fitness graph $\FitnessG$ (left) and its corresponding graph $G_{\Config}$ of the Loopy process (right).
All nodes in $G_{\Config}$ have the same fitness.}
\label{fig:loopy}
\end{figure}

\begin{restatable}{lemma}
{lemloopyeq}\label{lem:loopyeq}
For any seed set, the Heterogeneous and Loopy Moran processes share the same fixation probability.
\end{restatable}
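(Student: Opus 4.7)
The plan is to show that the Heterogeneous and Loopy processes induce the same \emph{embedded chain} on distinct configurations, from which the claim follows immediately because fixation is the event that this chain is absorbed in $V$ rather than in $\emptyset$.

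I fix a non-absorbing configuration $\Config$ and a target $\ConfigY \neq \Config$. In either process the transition $\Config \to \ConfigY$ is realized only through birth--death events along edges $(u,v)$ whose endpoints have opposite types in $\Config$, with $v$ taking $u$'s type. In the Heterogeneous process this event has probability $\frac{\Fitness_{\Config}(u)}{F(\Config)}\,\Weight(u,v)$, where $F(\Config) = \sum_{w} \Fitness_{\Config}(w)$. In the Loopy process every node has fitness $1$, so the birth step is uniform and the death step uses $\Weight_{\Config}(u,v) = \frac{\Fitness_{\Config}(u)}{\Fitness_{\max}}\Weight(u,v)$ by~(\ref{eq:loopyweights}); the event thus has probability $\frac{1}{n}\cdot\frac{\Fitness_{\Config}(u)}{\Fitness_{\max}}\Weight(u,v)$. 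The ratio of the two is the factor $c_{\Config} := \frac{F(\Config)}{n\,\Fitness_{\max}}$, which depends on $\Config$ alone and not on the particular edge. Summing over all edges that realize $\Config \to \ConfigY$, every Loopy transition probability is $c_{\Config}$ times the corresponding Heterogeneous probability.

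Since the factor $c_{\Config}$ is shared across all outgoing non-trivial transitions from $\Config$, the two processes agree on $\Prob[\RandomConfig_{t+1} = \ConfigY \mid \RandomConfig_{t+1} \neq \Config,\ \RandomConfig_t = \Config]$ for every $\ConfigY \neq \Config$; the residual mass is absorbed into the respective self-loop at $\Config$, and the self-loop weight prescribed in~(\ref{eq:loopyweights}) lies in $[0,1]$ because $\Fitness_{\Config}(u) \leq \Fitness_{\max}$ for every $u$. The embedded chains on distinct configurations therefore coincide, and starting from the same seed $\SeedSet$ both processes reach $V$ with the same probability, which gives $\FP_{\FitnessG}(\SeedSet)$ in either case.

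The main obstacle is purely algebraic: one must verify that the proportionality constant $c_{\Config}$ does not depend on the selected edge $(u,v)$, which is exactly what the rescaling in~(\ref{eq:loopyweights}) is designed to ensure. Beyond this, the argument reduces to the standard observation that absorption probabilities of a discrete-time Markov chain depend only on its embedded jump chain, so no explicit coupling is required.
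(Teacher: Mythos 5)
Your proposal is correct and follows essentially the same route as the paper's proof: both show that every non-trivial transition probability in the Loopy process equals the corresponding Heterogeneous probability times a factor depending only on the current configuration (your $c_{\Config}=F(\Config)/(n\,\Fitness_{\max})$, the paper's ratio $p'_1/p_1$), so the conditional law $\Prob[\RandomConfig_{t+1}=\ConfigY \mid \RandomConfig_{t+1}\neq\RandomConfig_t]$, and hence the absorption probability in $V$, coincides. The only cosmetic difference is that the paper normalizes by the total mass on all non-self-loop edges whereas you normalize by the mass on opposite-type edges; either choice works since the per-edge ratio is constant.
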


\Paragraph{Relation to the Two-Graphs process}
The Loopy process is a special case of the recent Two-Graphs Moran process~\cite{melissourgos2022extension}. 
To obtain the Two-Graphs process,
we define two graphs $G_M$ and $G_R$ for mutants and residents, respectively.
For each edge $(u,v)$, its weight $w_M(u,v)$ in $G_M$ and $w_R(u,v)$ in $G_R$, is obtained from \cref{eq:loopyweights}, considering that $u\in \Config$ and $u\not \in \Config$, respectively.
In turn, \cref{lem:loopyeq} implies that the hardness of \cref{thm:hardness_general,thm:hardness_mutant_biased} also hold for seed selection in the Two-Graphs model.

\Paragraph{Monotonicity} 
The following monotonicity corollary follows from \cref{lem:loopyeq} and the monotonicity of the Two Graphs process~\cite[Corollary 6]{melissourgos2022extension}.

\begin{restatable}{corollary}
{lemmonotonicity}\label{lem:monotonicity}
For any fitness graph~$\FitnessG=(G,(\FitnessM, \FitnessR))$ and any two seed sets $\SeedSet \subseteq \SeedSet'$, we have~$\FP_{\FitnessG}(\SeedSet)\leq \FP_{\FitnessG}(\SeedSet')$.
\end{restatable}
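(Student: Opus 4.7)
The plan is a two-step reduction that leverages structure already assembled in this section. First, \cref{lem:loopyeq} tells us that, for any seed set, the fixation probability in the Heterogeneous Moran process on $\FitnessG$ equals the fixation probability in the corresponding Loopy process. Applying this with $\SeedSet$ and with $\SeedSet'$, the monotonicity claim for $\FitnessG$ reduces to showing $\FP(\SeedSet)\leq \FP(\SeedSet')$ in the Loopy process. Second, the paragraph ``Relation to the Two-Graphs process'' exhibits the Loopy process on $\FitnessG$ as a special instance of the Two-Graphs Moran process: one defines the mutant-graph weights $w_M(u,v)$ via \cref{eq:loopyweights} taking $u\in\Config$, and the resident-graph weights $w_R(u,v)$ via \cref{eq:loopyweights} taking $u\notin\Config$. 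Under this translation, the seed set is carried through unchanged, so it suffices to prove monotonicity in the seed set for the Two-Graphs Moran process.

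That last statement is exactly \cite[Corollary~6]{melissourgos2022extension}, which I would invoke directly. For intuition (and to sketch why that cited result holds, in case we need to reproduce it), the standard argument is a coupling: run two Loopy processes in parallel, one seeded at $\SeedSet$ and one at $\SeedSet'\supseteq\SeedSet$, synchronized so that at each step the same node is chosen to reproduce along the same outgoing edge. One checks inductively that the mutant set in the $\SeedSet'$-process always contains that of the $\SeedSet$-process: a birth--death event can only add a mutant to both or to only the $\SeedSet'$-process, and can only remove a mutant from both or only from the $\SeedSet$-process. Consequently, on every sample path where the $\SeedSet$-process fixates the $\SeedSet'$-process does too, giving the desired inequality.

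The only nontrivial check is the consistency of the reduction to the Two-Graphs model: one must verify that the Loopy weight function defined in \cref{eq:loopyweights} produces two bona fide stochastic outgoing distributions on $G_M$ and $G_R$ (which follows because $\Fitness_{\Config}(u)/\Fitness_{\max}\in[0,1]$ and the self-loop weight absorbs the deficit), and that the step dynamics of the Loopy process and of the induced Two-Graphs instance coincide on every configuration. Both are bookkeeping and were essentially already used when setting up \cref{lem:loopyeq}. With that in hand, the chain \emph{Heterogeneous} $\to$ \emph{Loopy} $\to$ \emph{Two-Graphs} $\to$ \cite[Corollary~6]{melissourgos2022extension} yields $\FP_{\FitnessG}(\SeedSet)\leq\FP_{\FitnessG}(\SeedSet')$.
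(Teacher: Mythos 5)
Your argument matches the paper's exactly: the corollary is derived by passing from the Heterogeneous process to the Loopy process via \cref{lem:loopyeq}, viewing the latter as a special case of the Two-Graphs model, and invoking the monotonicity result of \cite[Corollary~6]{melissourgos2022extension}. The coupling sketch and the stochasticity check you add are sound but not needed beyond what the cited result already provides.
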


\Paragraph{Submodularity} We now turn our attention to the submodularity of the fixation probability in the Heterogeneous Moran process.
Although the function is not submodular in general, we prove that it becomes submodular on mutant-biased fitness graphs.
In particular, we show that for any two seed sets $\SeedSet, \SeedSetT\subseteq V$, the following submodularity condition holds:
\begin{linenomath*}
\[
\FP_{\FitnessG}(\SeedSet) + \FP_{\FitnessG}(\SeedSetT) \geq \FP_{\FitnessG}(\SeedSet\cup \SeedSetT) + \FP_{\FitnessG}(\SeedSet\cap \SeedSetT)
\numberthis \label{eq:submodularity}
\]
\end{linenomath*}
Our proof is via a four-way coupling of the corresponding processes starting in one of the seed sets of \cref{eq:submodularity}.

\begin{restatable}{lemma}{lemsubmodularity}\label{lem:submodularity}
For any mutant-biased fitness graph~$\,\FitnessG=(G,(\FitnessM, \FitnessR))$, the fixation probability $\FP_{\FitnessG}(\SeedSet)$ is submodular.
\end{restatable}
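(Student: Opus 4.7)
The plan is to leverage the Loopy-process equivalence of \cref{lem:loopyeq} and couple, on a common probability space, the four processes $\RandomConfig^\SeedSet,\RandomConfig^\SeedSetT,\RandomConfig^{\SeedSet\cap\SeedSetT},\RandomConfig^{\SeedSet\cup\SeedSetT}$ starting from the four seed sets of \cref{eq:submodularity}, so that submodularity holds pointwise in the coupled sample space. At each step, sample independently a node $u$ uniformly from $V$, a threshold $r\in[0,1]$ uniformly, and a target $v\sim\Weight(u,\cdot)$; in each of the four processes with current configuration $\Config$ the step fires (i.e.\ makes $v$ take $u$'s type in $\Config$) iff $r\leq\Fitness_\Config(u)/\Fitness_{\max}$. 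By \cref{lem:loopyeq} each marginal chain has the correct law, and mutant bias $\FitnessM(u)\geq\FitnessR(u)$ guarantees that the mutant firing threshold is always at least the resident one.

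By induction on $t$ I would verify two invariants: the \emph{union equality} $\RandomConfig^{\SeedSet\cup\SeedSetT}_t=\RandomConfig^\SeedSet_t\cup\RandomConfig^\SeedSetT_t$ and the \emph{intersection inclusion} $\RandomConfig^{\SeedSet\cap\SeedSetT}_t\subseteq\RandomConfig^\SeedSet_t\cap\RandomConfig^\SeedSetT_t$. Both hold at $t=0$ by construction. The step-wise case analysis splits on the type of $u$ in each process; by the invariants the only delicate case is when $u$ is mutant in exactly one of $\RandomConfig^\SeedSet,\RandomConfig^\SeedSetT$ (so mutant in $\RandomConfig^{\SeedSet\cup\SeedSetT}$ and resident in $\RandomConfig^{\SeedSet\cap\SeedSetT}$) and $r$ lies in the asymmetric regime $(\FitnessR(u)/\Fitness_{\max},\FitnessM(u)/\Fitness_{\max}]$. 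Mutant bias ensures this regime consists entirely of mutant firings, so only the two processes that view $u$ as mutant fire; both add $v$ to their mutant sets in sync, preserving the invariants.

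With the invariants in hand, submodularity follows from a pointwise identity on fixation indicators. The intersection inclusion forces $\mathbf{1}[\RandomConfig^{\SeedSet\cap\SeedSetT}\text{ fix}]\leq\mathbf{1}[\RandomConfig^\SeedSet\text{ fix}]\wedge\mathbf{1}[\RandomConfig^\SeedSetT\text{ fix}]$, since the intersection chain can only reach $V$ after both upper chains already have. The union equality together with the fact that $V$ and $\emptyset$ are the only absorbing states yields $\mathbf{1}[\RandomConfig^{\SeedSet\cup\SeedSetT}\text{ fix}]=\mathbf{1}[\RandomConfig^\SeedSet\text{ fix}]\vee\mathbf{1}[\RandomConfig^\SeedSetT\text{ fix}]$: if the union chain fixates then the two individual chains cannot both go extinct (else so would their union), and the reverse direction is monotonicity (\cref{lem:monotonicity}). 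Since $a+b=(a\vee b)+(a\wedge b)$ for indicator-valued $a,b$, submodularity \cref{eq:submodularity} holds pointwise, and taking expectations yields the lemma.

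The main obstacle is the inductive verification of the union equality in the asymmetric regime, where mutant bias enters decisively: absent it, a regime of ``resident-only'' firings could make one individual chain lose the target $v$ while the union chain retains $v$ through the other chain, breaking the equality. A secondary subtlety is lifting ``fixation of $\RandomConfig^{\SeedSet\cup\SeedSetT}$ at some finite time'' to ``fixation of at least one individual chain'', which uses almost-sure absorption of each chain at $V$ or $\emptyset$ together with the persistence of the union equality for all $t$.
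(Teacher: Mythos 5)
Your proposal is correct, and it shares the paper's overall architecture: pass to the Loopy process via \cref{lem:loopyeq}, couple the four chains seeded at $\SeedSet$, $\SeedSetT$, $\SeedSet\cup\SeedSetT$, $\SeedSet\cap\SeedSetT$ by reusing the same reproducing node, and invoke mutant bias exactly where the self-loop (firing-threshold) asymmetry appears. It diverges, however, in the key invariant, and this changes the endgame. The paper's coupling only maintains the \emph{inclusion} $\RandomConfig^1_t\cup\RandomConfig^2_t\subseteq\RandomConfig^3_t$, which is too weak to conclude that fixation of the union chain forces fixation of $\MP_1$ or $\MP_2$; to close that gap the paper additionally tracks, inside $\MP_3$, the lineage sets $\RandomConfigY_t,\RandomConfigZ_t$ of descendants of $\SeedSet$ and of $\SeedSetT$, maintains the extra invariants $\RandomConfigY_t\subseteq\RandomConfig^1_t$ and $\RandomConfigZ_t\subseteq\RandomConfig^2_t$, and argues that a fixated run of $\MP_3$, if continued, almost surely ends with one of the two lineages occupying all of $V$. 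Your threshold coupling (fire iff $r\le\Fitness_\Config(u)/\Fitness_{\max}$, with a shared target $v\sim\Weight(u,\cdot)$) instead preserves the \emph{equality} $\RandomConfig^{\SeedSet\cup\SeedSetT}_t=\RandomConfig^{\SeedSet}_t\cup\RandomConfig^{\SeedSetT}_t$ --- and your case analysis of the asymmetric regime $(\FitnessR(u)/\Fitness_{\max},\FitnessM(u)/\Fitness_{\max}]$ is precisely where mutant bias is needed and does check out --- so the pointwise identity $\ConstantOne[\text{union fixates}]=\ConstantOne[\MP_1\text{ fixates}]\vee\ConstantOne[\MP_2\text{ fixates}]$ follows directly from almost-sure absorption, with no lineage bookkeeping. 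One small imprecision: the direction ``$\MP_1$ or $\MP_2$ fixates implies the union fixates'' is not really \cref{lem:monotonicity} (a statement about expectations) but the pointwise consequence of your own invariant $\RandomConfig^{\SeedSet\cup\SeedSetT}_t\supseteq\RandomConfig^{\SeedSet}_t$; this is cosmetic. Net effect: your route buys a shorter and more transparent final step, at the modest cost of verifying that the threshold coupling has the correct marginals against \cref{eq:loopyweights}, which it does.
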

\begin{proof}
Let~$\MP_1 = (\RandomConfig^1_{t})_{t \geq 0}$, $\MP_2 = (\RandomConfig^2_{t})_{t \geq 0}$, $\MP_3 = (\RandomConfig^3_{t})_{t \geq 0}$, and $\MP_4 = (\RandomConfig^4_{t})_{t \geq 0}$, be four Loopy processes with seed sets~$\SeedSet$, $\SeedSetT$, $\SeedSet \cup \SeedSetT$ and~$\SeedSet \cap \SeedSetT$, respectively. 
To prove submodularity, we employ two tricks for~$\MP_3$. 
First, along its configurations~$\RandomConfig^3_{t}$, we also keep track of the set of mutants~$\RandomConfigY_{t}$ (resp., $\RandomConfigZ_{t}$) that are copies of some initial node in~$\SeedSet$ (resp., $\SeedSetT$). 
Whenever a node~$v$ receives the mutant trait from a neighbor~$u$, we place~$v$ in~$\RandomConfigY_{t+1}$ (resp., $\RandomConfigZ_{t + 1}$) following the membership of~$u$ in~$\RandomConfigY_{t}$ (resp., $\RandomConfigZ_{t}$). 
Initially, $\RandomConfigY_0 = \SeedSet$ and~$\RandomConfigZ_0 = \SeedSetT$. 
Second, with probability~1, every run of~$\MP_3$ that results in fixation, eventually (i.e., if we let the process run on) leads to the fixation of~$\SeedSet$ or~$\SeedSetT$ 
(possibly both, assuming~$\SeedSet \cap \SeedSetT\neq \emptyset$); 
that is, every node is a copy of some node in~$\SeedSet$ or~$\SeedSetT$. 
We thus compute the fixation probability with seed~$\SeedSet \cup \SeedSetT$ by summing over runs in which~$\SeedSet$ or~$\SeedSetT$ fixates.

To prove submodularity, we consider this refined view of the process and establish a four-way coupling between~$\MP_1$, $\MP_2$, $\MP_3$ and~$\MP_4$ that guarantees the following invariants: (i)~$\RandomConfig^1_{t}\cup \RandomConfig^2_{t}\subseteq \RandomConfig^3_{t}$, (ii)~$\RandomConfig^4_{t} \subseteq\RandomConfig^1_{t}\cap \RandomConfig^2_{t}$, (iii)~$\RandomConfigY_{t} \subseteq \RandomConfig^1_{t}$, and (iv)~$\RandomConfigZ_{t}\subseteq \RandomConfig^2_{t}$. Now, consider any execution in which~$\MP_3$ fixates. Since~$\SeedSet$ or~$\SeedSetT$ eventually fixates in~$\MP_3$, due to invariants~(iii) and~(iv), at least one of~$\MP_1, \MP_2$ fixates as well. Moreover, if~$\MP_4$ also fixates, due to invariant~(ii),
both~$\MP_1$ and~$\MP_2$ fixate. Thus the invariants guarantee submodularity.

The invariants hold at~$t=0$. 
Now, consider some arbitrary time~$t$ with the four processes at configurations~$\RandomConfig^j_t = \Config^j$, for~$j\in \{1,2,3,4\}$, $\RandomConfigY_{t} = \ConfigY$, and~$\RandomConfigZ_{t} = \ConfigZ$. 
To obtain~$\RandomConfig^j_{t + 1}$, we sample the same node~$u$ for reproduction with probability~$\sfrac1n$ in all processes. 
From invariants~(i) and (ii), and since $\FitnessM(u)\geq\FitnessR(u)$, we derive that $\Weight_{\Config_3}(u,u) \leq \Weight_{\Config_j}(u,u) \leq \Weight_{\Config_4}(u,u)$ for~$j \in \{1, 2\}$, as residents have a larger self-loop weight.
In~$\MP_3$, we choose a neighbor~$v$ of~$u$ with probability~$\Weight_{\Config_3}(u, v)$ and propagate the trait of~$u$ to~$v$. 
In~$\MP_1$, $\MP_2$ and~$\MP_4$, if~$u = v$, we perform the same update; otherwise, if~$u$ has the same type as in~$\MP_3$, we also perform the same update. 
From the invariants, if~$u$ is resident in~$\MP_3$ then the same holds in all other processes, while if~$u$ is a mutant in~$\MP_3$ then the same holds in at least one of~$\MP_1$, $\MP_2$ (depending on whether~$u \in \ConfigY$ and~$u \in \ConfigZ$), and if that holds for~$\MP_1$ and~$\MP_2$, then it holds for~$\MP_4$. 
However, if $u$ is resident in~$\MP_j$ for some~$j\in \{1, 2\}$ but mutant in~$\MP_3$, i.e., $u \in \Config^3 \setminus \Config^j$, then, due to invariant~(ii), $u$ is also resident in~$\MP_4$, i.e., $u \in \Config^3 \setminus \Config^4$; then, in~$\MP_j$ and~$\MP_4$, $u$ propagates to itself with probability $\Weight_{\Config_j}(u,u)-\Weight_{\Config_3}(u,u)\geq 0$, 
and to~$v$ with the remaining probability~$1 - (\Weight_{\Config_j}(u,u) - \Weight_{\Config_3}(u,u))$. 
It follows that all three invariants are maintained.
\end{proof}

Following~\cite{nemhauser1978analysis}, monotonicity and submodularit lead to the following approximation guarantee.

\begin{theorem}\label{thm:apx}
Given a mutant-biased 
fitness graph $\FitnessG$ and budget $\SizeConst$, let~$\SeedSet^*$ be an optimal seed set and~$\SeedSet_{\text{gr}}$ the solution of the Greedy algorithm. 
We have $\FP_{\,\FitnessG}(\SeedSet_{\text{gr}}) \geq \left( 1 - \sfrac{1}{e} \right) \FP_{\,\FitnessG}(\SeedSet^{*})$.
\end{theorem}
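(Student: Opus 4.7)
The plan is to recognize Theorem~\ref{thm:apx} as a direct instance of the classical result of Nemhauser, Wolsey and Fisher on greedy maximization of nonnegative, monotone, submodular set functions subject to a cardinality constraint. So the proof reduces to verifying that the three hypotheses of that meta-theorem hold for the objective $\FP_{\FitnessG}(\SeedSet)$ on a mutant-biased fitness graph, and then invoking the $(1-1/e)$ bound.

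First I would observe nonnegativity: $\FP_{\FitnessG}(\SeedSet)\in[0,1]$ since it is a probability, and in particular $\FP_{\FitnessG}(\emptyset)=0$. Next, monotonicity of $\FP_{\FitnessG}$ with respect to set inclusion of seeds is exactly the content of Corollary~\ref{lem:monotonicity}, which applies to any fitness graph and hence to mutant-biased ones in particular. Finally, submodularity on mutant-biased graphs is exactly Lemma~\ref{lem:submodularity}. These three properties put us squarely in the setting of \cite{nemhauser1978analysis}.

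The Greedy algorithm, as used in \cite{nemhauser1978analysis}, starts from $\SeedSet_0=\emptyset$ and, at each step $i=1,\dots,\SizeConst$, augments the current set $\SeedSet_{i-1}$ with the single node $u\in V\setminus \SeedSet_{i-1}$ that maximizes the marginal gain $\FP_{\FitnessG}(\SeedSet_{i-1}\cup\{u\})-\FP_{\FitnessG}(\SeedSet_{i-1})$, producing $\SeedSet_{\text{gr}}=\SeedSet_{\SizeConst}$. Plugging $\FP_{\FitnessG}$ into the Nemhauser--Wolsey--Fisher analysis yields $\FP_{\FitnessG}(\SeedSet_{\text{gr}})\geq(1-(1-1/\SizeConst)^{\SizeConst})\,\FP_{\FitnessG}(\SeedSet^*)\geq(1-1/e)\,\FP_{\FitnessG}(\SeedSet^*)$, which is the desired bound.

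The only genuine step is checking the hypotheses of the classical theorem, and the real work for that has already been done: monotonicity in Corollary~\ref{lem:monotonicity} and submodularity in Lemma~\ref{lem:submodularity}, both of which route through the equivalence with the Loopy process. There is no further obstacle; the only mild care needed is to note that the Greedy algorithm requires access to $\FP_{\FitnessG}(\cdot)$, which on mutant-biased undirected graphs with polynomially related fitnesses is itself only approximately computable via the FPRAS of Corollary~\ref{cor:fpras}, but this affects only the runtime/precision and not the approximation ratio stated in the theorem.
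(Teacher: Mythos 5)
Your proposal is correct and matches the paper's own argument: the paper likewise proves \cref{thm:apx} by combining the monotonicity of \cref{lem:monotonicity} and the submodularity of \cref{lem:submodularity} and then directly invoking the greedy $(1-\sfrac{1}{e})$ guarantee of \cite{nemhauser1978analysis}. Your additional remarks on nonnegativity and on the FPRAS being needed only for evaluating the objective are consistent with, and slightly more explicit than, the paper's treatment.
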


The Greedy algorithm builds the seed set iteratively by choosing the node that yields the maximum fixation probability gain.
Finally, note that due to symmetry, on resident-biased fitness graphs  ($\FitnessM(u)\leq \FitnessR(u)$ for all $u$), 
$\FP_{\FitnessG}(\SeedSet)$ is \emph{supermodular}, thus Greedy offers no approximation guarantees.
%----------------------------------------------------------------------------
%EXPERIMENTAL ANALYSIS
%---------------------------------------------------------------------------

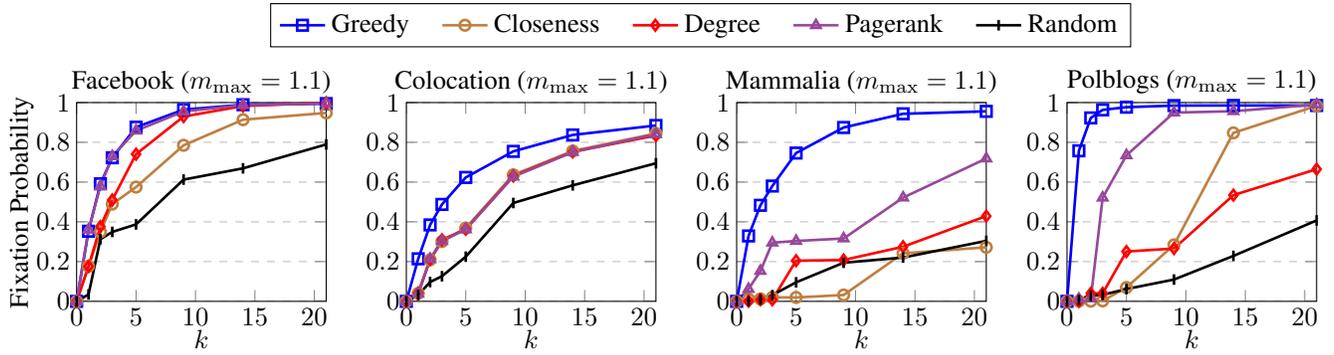
\begin{figure*}[ht]
\pgfplotsset{width=4.9cm,compat=1.3}

\begin{subfigure}[b]{.24\linewidth}
\begin{tikzpicture}
\pgfplotsset{every axis legend/.append style={
at={(2.5,1.5)},
anchor=north}}
\begin{axis}[
    legend columns=7,
    legend cell align={left},
    legend style={/tikz/every even column/.append style={column sep=0.35cm}},
    title={Facebook ($\MaxMutantFitness=1.1$)},
    title style={yshift=-6 pt,},
    xlabel = $k$,
    ylabel = {Fixation Probability},
    ymajorgrids=true,
    grid style=dashed,
    ylabel shift = -6 pt,
    xlabel shift = -5 pt,
    yticklabel shift = -1 pt,
    xticklabel shift = -1 pt,
    xmin=0, xmax=21,
    ymin=0, ymax=1
]
\addplot[
    color=blue,
    mark=square,
    mark options=solid,
    mark size=2pt,
    line width=1pt
    ] table[x = {k}, y = {FP}]{data/facebook/greedy_results_k.txt}; ]

\addplot[
    color=brown,
    mark = o,
    mark options=solid,
    mark size=2pt,
    line width=1pt
    ] table[x ={k}, y = {FP}]{data/facebook/MinClosenessCentrality_results_k.txt}; ]

\addplot[
    color=red,
    mark options=solid,
    mark = diamond,
    mark size=2pt,
    line width=1pt
    ] table[x={k}, y = {FP}]{data/facebook/MinDegreeCentrality_results_k.txt}; ]

\addplot[
    color=Purple,
    mark options=solid,
    mark = triangle,
    mark size=2pt,
    line width=1pt
    ] table[x ={k}, y = {FP}]{data/facebook/MinPageRank_results_k.txt}; ]

\addplot[
    color=black,
    mark options=solid,
    mark = |,
    mark size=2pt,
    line width=1pt
    ] table[x ={k}, y = {FP}]{data/facebook/Random_results_k.txt}; ]

\legend{Greedy, Closeness, Degree, Pagerank, Random}
\end{axis}
\end{tikzpicture}

\end{subfigure}
\hfill
\hspace{2.5mm}
\begin{subfigure}[b]{.24\linewidth}
\begin{tikzpicture}
\pgfplotsset{every axis legend/.append style={
at={(2.15,1.41)},
anchor=north}}
\begin{axis}[
    legend columns=7,
    legend cell align={left},
    legend style={/tikz/every even column/.append style={column sep=0.35cm}},
    title={Colocation ($\MaxMutantFitness=1.1$)},
    title style={yshift=-6 pt,},
    xlabel = $k$,
    ymajorgrids=true,
    grid style=dashed,
    ylabel shift = -6 pt,
    xlabel shift = -5 pt,
    yticklabel shift = -1 pt,
    xticklabel shift = -1 pt,
    xmin=0, xmax=21,
    ymin=0, ymax=1
]
\addplot[
    color=blue,
    mark=square,
    mark options=solid,
    mark size=2pt,
    line width=1pt
    ] table[x = {k}, y = {FP}]{data/colocation/greedy_results_k.txt}; ]

\addplot[
    color=brown,
    mark = o,
    mark options=solid,
    mark size=2pt,
    line width=1pt
    ] table[x ={k}, y = {FP}]{data/colocation/MinClosenessCentrality_results_k.txt}; ]

\addplot[
    color=red,
    mark options=solid,
    mark = diamond,
    mark size=2pt,
    line width=1pt
    ] table[x={k}, y = {FP}]{data/colocation/MinDegreeCentrality_results_k.txt}; ]

\addplot[
    color=Purple,
    mark options=solid,
    mark = triangle,
    mark size=2pt,
    line width=1pt
    ] table[x ={k}, y = {FP}]{data/colocation/MinPageRank_results_k.txt}; ]

\addplot[
    color=black,
    mark options=solid,
    mark = |,
    mark size=2pt,
    line width=1pt
    ] table[x ={k}, y = {FP}]{data/colocation/Random_results_k.txt}; ]

\end{axis}
\end{tikzpicture}
\end{subfigure}
\hfill
\begin{subfigure}[b]{.24\linewidth}
\begin{tikzpicture}
\pgfplotsset{every axis legend/.append style={
at={(2.15,1.41)},
anchor=north}}
\begin{axis}[
    legend columns=7,
    legend cell align={left},
    legend style={/tikz/every even column/.append style={column sep=0.35cm}},
    title={Mammalia ($\MaxMutantFitness=1.1$)},
    title style={yshift=-6 pt,},
    xlabel = $k$,
    ymajorgrids=true,
    grid style=dashed,
    ylabel shift = -6 pt,
    xlabel shift = -5 pt,
    yticklabel shift = -1 pt,
    xticklabel shift = -1 pt,
    xmin=0, xmax=21,
    ymin=0, ymax=1
]
\addplot[
    color=blue,
    mark=square,
    mark options=solid,
    mark size=2pt,
    line width=1pt
    ] table[x = {k}, y = {FP}]{data/mammalia/greedy_results_k.txt}; ]

\addplot[
    color=brown,
    mark = o,
    mark options=solid,
    mark size=2pt,
    line width=1pt
    ] table[x ={k}, y = {FP}]{data/mammalia/MinClosenessCentrality_results_k.txt}; ]

\addplot[
    color=red,
    mark options=solid,
    mark = diamond,
    mark size=2pt,
    line width=1pt
    ] table[x={k}, y = {FP}]{data/mammalia/MinDegreeCentrality_results_k.txt}; ]

\addplot[
    color=Purple,
    mark options=solid,
    mark = triangle,
    mark size=2pt,
    line width=1pt
    ] table[x ={k}, y = {FP}]{data/mammalia/MinPageRank_results_k.txt}; ]

\addplot[
    color=black,
    mark options=solid,
    mark = |,
    mark size=2pt,
    line width=1pt
    ] table[x ={k}, y = {FP}]{data/mammalia/Random_results_k.txt}; ]
\end{axis}
\end{tikzpicture}

\end{subfigure}
\hfill
\begin{subfigure}[b]{.24\linewidth}
\begin{tikzpicture}
\pgfplotsset{every axis legend/.append style={
at={(2.15,1.41)},
anchor=north}}
\begin{axis}[
    legend columns=7,
    legend cell align={left},
    legend style={/tikz/every even column/.append style={column sep=0.35cm}},
    title={Polblogs ($\MaxMutantFitness=1.1$)},
    title style={yshift=-6 pt,},
    xlabel = $k$,
    ymajorgrids=true,
    grid style=dashed,
    ylabel shift = -6 pt,
    xlabel shift = -5 pt,
    yticklabel shift = -1 pt,
    xticklabel shift = -1 pt,
    xmin=0, xmax=21,
    ymin=0, ymax=1
]
\addplot[
    color=blue,
    mark=square,
    mark options=solid,
    mark size=2pt,
    line width=1pt
    ] table[x = {k}, y = {FP}]{data/polblogs/greedy_results_k.txt}; ]

\addplot[
    color=brown,
    mark = o,
    mark options=solid,
    mark size=2pt,
    line width=1pt
    ] table[x ={k}, y = {FP}]{data/polblogs/MinClosenessCentrality_results_k.txt}; ]

\addplot[
    color=red,
    mark options=solid,
    mark = diamond,
    mark size=2pt,
    line width=1pt
    ] table[x={k}, y = {FP}]{data/polblogs/MinDegreeCentrality_results_k.txt}; ]

\addplot[
    color=Purple,
    mark options=solid,
    mark = triangle,
    mark size=2pt,
    line width=1pt
    ] table[x ={k}, y = {FP}]{data/polblogs/MinPageRank_results_k.txt}; ]

\addplot[
    color=black,
    mark options=solid,
    mark = |,
    mark size=2pt,
    line width=1pt
    ] table[x ={k}, y = {FP}]{data/polblogs/Random_results_k.txt}; ]

\end{axis}
\end{tikzpicture}

\end{subfigure}

\caption{\label{fig:fp_k}Fixation probability vs. $\SizeConst$.
}
\end{figure*}
\begin{figure*}[ht]
\pgfplotsset{width=4.9cm,compat=1.3}

\begin{subfigure}[b]{.24\linewidth}
\begin{tikzpicture}
\pgfplotsset{every axis legend/.append style={
at={(2.15,1.41)},
anchor=north}}
\begin{axis}[
    legend columns=7,
    legend cell align={left},
    legend style={/tikz/every even column/.append style={column sep=0.35cm}},
    title={Facebook ($k=5$)},
    title style={yshift=-8 pt,},
    xlabel = $\MaxMutantFitness$,
    ylabel = {Fixation Probability},
    ymajorgrids=true,
    grid style=dashed,
    ylabel shift = -6 pt,
    xlabel shift = -3 pt,
    yticklabel shift = -1 pt,
    xticklabel shift = -2 pt,
    xmin=1, xmax=2,
    ymin=0, ymax=1,
    xtick = {1, 1.05, 1.1, 1.25, 1.5, 2.0},
    xticklabels={, , 1.1,, 1.5, 2.0}
]
\addplot[select coords between index={0}{5}][
    color=blue,
    mark=square,
    mark options=solid,
    mark size=2pt,
    line width=1pt
    ] table[x expr={\thisrowno{1}}, y = {FP}]{data/facebook/greedy_results_r.txt}; ]

\addplot[select coords between index={0}{5}][
    color=brown,
    mark = o,
    mark options=solid,
    mark size=2pt,
    line width=1pt
    ] table[x expr={\thisrowno{1}}, y = {FP}]{data/facebook/MinClosenessCentrality_results_r.txt}; ]

\addplot[select coords between index={0}{5}][
    color=red,
    mark = diamond,
    mark options=solid,
    mark size=2pt,
    line width=1pt
    ] table[x expr={\thisrowno{1}}, y = {FP}]{data/facebook/MinDegreeCentrality_results_r.txt}; ]

\addplot[select coords between index={0}{5}][
    color=Purple,
    mark = triangle,
    mark options=solid,
    mark size=2pt,
    line width=1pt
    ] table[x expr={\thisrowno{1}}, y = {FP}]{data/facebook/MinPageRank_results_r.txt}; ]

\addplot[select coords between index={0}{5}][
    color=black,
    mark = |,
    mark options=solid,
    mark size=2pt,
    line width=1pt
    ] table[x expr={\thisrowno{1}}, y = {FP}]{data/facebook/Random_results_r.txt}; ]

\end{axis}
\end{tikzpicture}

\end{subfigure}
\hfill
\hspace{2.5mm}
\begin{subfigure}[b]{.24\linewidth}
\begin{tikzpicture}
\pgfplotsset{every axis legend/.append style={
at={(2.15,1.41)},
anchor=north}}
\begin{axis}[
    legend columns=7,
    legend cell align={left},
    legend style={/tikz/every even column/.append style={column sep=0.35cm}},
    title={Colocation ($k=5$)},
    title style={yshift=-8 pt,},
    xlabel = $\MaxMutantFitness$,
    ymajorgrids=true,
    grid style=dashed,
    ylabel shift = -6 pt,
    xlabel shift = -3 pt,
    yticklabel shift = -1 pt,
    xticklabel shift = -2 pt,
    xmin=1, xmax=2,
    ymin=0, ymax=1,
    xtick = {1, 1.05, 1.1, 1.25, 1.5, 2.0},
    xticklabels={, , 1.1, , 1.5, 2.0}
]
\addplot[select coords between index={0}{5}][
    color=blue,
    mark=square,
    mark options=solid,
    mark size=2pt,
    line width=1pt
    ] table[x expr={\thisrowno{1}}, y = {FP}]{data/colocation/greedy_results_r.txt}; ]

\addplot[select coords between index={0}{5}][
    color=brown,
    mark = o,
    mark options=solid,
    mark size=2pt,
    line width=1pt
    ] table[x expr={\thisrowno{1}}, y = {FP}]{data/colocation/MinClosenessCentrality_results_r.txt}; ]

\addplot[select coords between index={0}{5}][
    color=red,
    mark = diamond,
    mark options=solid,
    mark size=2pt,
    line width=1pt
    ] table[x expr={\thisrowno{1}}, y = {FP}]{data/colocation/MinDegreeCentrality_results_r.txt}; ]

\addplot[select coords between index={0}{5}][
    color=Purple,
    mark = triangle,
    mark options=solid,
    mark size=2pt,
    line width=1pt
    ] table[x expr={\thisrowno{1}}, y = {FP}]{data/colocation/MinPageRank_results_r.txt}; ]

\addplot[select coords between index={0}{5}][
    color=black,
    mark = |,
    mark options=solid,
    mark size=2pt,
    line width=1pt
    ] table[x expr={\thisrowno{1}}, y = {FP}]{data/colocation/Random_results_r.txt}; ]

\end{axis}
\end{tikzpicture}

\end{subfigure}
\hfill
\begin{subfigure}[b]{.24\linewidth}
\begin{tikzpicture}
\pgfplotsset{every axis legend/.append style={
at={(2.15,1.41)},
anchor=north}}
\begin{axis}[
    legend columns=7,
    legend cell align={left},
    legend style={/tikz/every even column/.append style={column sep=0.35cm}},
    title={Mammalia ($k=5$)},
    title style={yshift=-8 pt,},
    xlabel = $\MaxMutantFitness$,
    ymajorgrids=true,
    grid style=dashed,
    ylabel shift = -6 pt,
    xlabel shift = -3 pt,
    yticklabel shift = -1 pt,
    xticklabel shift = -2 pt,
    xmin=1, xmax=2,
    ymin=0, ymax=1,
    xtick = {1, 1.05, 1.1, 1.25, 1.5, 2.0},
    xticklabels={, , 1.1, , 1.5, 2.0}
]
\addplot[select coords between index={0}{5}][
    color=blue,
    mark=square,
    mark options=solid,
    mark size=2pt,
    line width=1pt
    ] table[x expr={\thisrowno{1}}, y = {FP}]{data/mammalia/greedy_results_r.txt}; ]

\addplot[select coords between index={0}{5}][
    color=brown,
    mark = o,
    mark options=solid,
    mark size=2pt,
    line width=1pt
    ] table[x expr={\thisrowno{1}}, y = {FP}]{data/mammalia/MinClosenessCentrality_results_r.txt}; ]

\addplot[select coords between index={0}{5}][
    color=red,
    mark = diamond,
    mark options=solid,
    mark size=2pt,
    line width=1pt
    ] table[x expr={\thisrowno{1}}, y = {FP}]{data/mammalia/MinDegreeCentrality_results_r.txt}; ]

\addplot[select coords between index={0}{5}][
    color=Purple,
    mark = triangle,
    mark options=solid,
    mark size=2pt,
    line width=1pt
    ] table[x expr={\thisrowno{1}}, y = {FP}]{data/mammalia/MinPageRank_results_r.txt}; ]

\addplot[select coords between index={0}{5}][
    color=black,
    mark = |,
    mark options=solid,
    mark size=2pt,
    line width=1pt
    ] table[x expr={\thisrowno{1}}, y = {FP}]{data/mammalia/Random_results_r.txt}; ]

\end{axis}
\end{tikzpicture}

\end{subfigure}
\hfill
\begin{subfigure}[b]{.24\linewidth}
\begin{tikzpicture}
\pgfplotsset{every axis legend/.append style={
at={(2.15,1.41)},
anchor=north}}
\begin{axis}[
    legend columns=7,
    legend cell align={left},
    legend style={/tikz/every even column/.append style={column sep=0.35cm}},
    title={Polblogs ($k=5$)},
    title style={yshift=-8 pt,},
    xlabel = $\MaxMutantFitness$,
    ymajorgrids=true,
    grid style=dashed,
    ylabel shift = -6 pt,
    xlabel shift = -3 pt,
    yticklabel shift = -1 pt,
    xticklabel shift = -2 pt,
    xmin=1, xmax=2,
    ymin=0, ymax=1,
    xtick = {1, 1.05, 1.1, 1.25, 1.5, 2.0},
    xticklabels={, , 1.1, , 1.5, 2.0}
]
\addplot[select coords between index={0}{5}][
    color=blue,
    mark=square,
    mark options=solid,
    mark size=2pt,
    line width=1pt
    ] table[x expr={\thisrowno{1}}, y = {FP}]{data/polblogs/greedy_results_r.txt}; ]

\addplot[select coords between index={0}{5}][
    color=brown,
    mark = o,
    mark options=solid,
    mark size=2pt,
    line width=1pt
    ] table[x expr={\thisrowno{1}}, y = {FP}]{data/polblogs/MinClosenessCentrality_results_r.txt}; ]

\addplot[select coords between index={0}{5}][
    color=red,
    mark = diamond,
    mark options=solid,
    mark size=2pt,
    line width=1pt
    ] table[x expr={\thisrowno{1}}, y = {FP}]{data/polblogs/MinDegreeCentrality_results_r.txt}; ]

\addplot[select coords between index={0}{5}][
    color=Purple,
    mark = triangle,
    mark options=solid,
    mark size=2pt,
    line width=1pt
    ] table[x expr={\thisrowno{1}}, y = {FP}]{data/polblogs/MinPageRank_results_r.txt}; ]

\addplot[select coords between index={0}{5}][
    color=black,
    mark = |,
    mark options=solid,
    mark size=2pt,
    line width=1pt
    ] table[x expr={\thisrowno{1}}, y = {FP}]{data/polblogs/Random_results_r.txt}; ]

\end{axis}
\end{tikzpicture}

\end{subfigure}
\caption{\label{fig:fp_delta_k}Fixation probability vs. $\MaxMutantFitness$.
}
\end{figure*}

\section{Experimental Analysis}\label{sec:experiments}

Here, we present our experimental evaluation of the Greedy algorithm and other network heuristics, 
varying the seed size $k$ and the maximum mutant fitness $m_{\max}$.

\begin{table}[!b]
{
\centering
 \begin{tabular}{||c c c  c c||} 
 \hline
 Name & $|V|$  & $|E|$ &  Directed & Edge-Weighted  \\ [-0.1ex] 
 \hline
  Facebook & 324 & 5028  & \xmark & \xmark  \\
  Colocation & 242 & 53188 & \xmark & \cmark  \\ 
   Mammalia & 327 & 1045 &  \cmark & \cmark  \\
 Polblogs & 793 & 15839 &  \cmark & \xmark  \\ 
 \hline
 \end{tabular}
\caption{Dataset characteristics.}\label{tab:datasets}
}
\end{table}

\Paragraph{Datasets} We use four real-world networks from  Netzschleuder, SNAP and Network Repository (\cref{tab:datasets}).

\begin{compactenum}
\item \emph{Facebook:} A Facebook ego network in which nodes represent profiles and edges indicate friendship.
\item \emph{Colocation:} A proximity network of students and teachers of a French school.
Edge weights count
the frequency of contact between individuals
during a two-day period.
\item \emph{Mammalia:} An animal-contact network based on movements of voles (\emph{Microtus agrestis}).
Each edge weight counts the common traps the two voles were caught in.
\item \emph{Polblogs:}
A network of hyperlinks among a large set of U.S. political weblogs from before the 2004 election. 
\end{compactenum}
Our experiments are not meant to be exhaustive, but rather indicative of the performance of the greedy algorithm and common network-optimization heuristics on a few diverse networks.
We set the resident fitness to $1$,
while the mutant fitness of each node $u$ is determined by sampling a uniform distribution $\FitnessM(u)\sim \mathcal{U}(1,\MaxMutantFitness)$.
This results in mutant-biased graphs, for which \cref{thm:apx} guarantees that the fixation probability admits a Monte Carlo approximation.

\Paragraph{Greedy and Baselines}
We evaluate the performance of the standard Greedy algorithm behind \cref{thm:apx}~\cite{nemhauser1978analysis} %\Andreas{Cite}
against four common baseline algorithms from related literature on seed selection under diffusion processes \cite{brendborg2022fixation,zhao2021structural,liu2017influence}.
\begin{compactenum}
\item \emph{Random:} select uniformly at random.
\item \emph{Degree:} select by smallest degree.
\item \emph{Closeness:} select by smallest closeness centrality.
\item \emph{PageRank:} select by smallest PageRank score.
\end{compactenum}

The Random selection strategy is a standard baseline to measure the intricacy of the problem.
Degree is the only existing algorithm for seed selection in the Moran model, and is optimal for undirected and unweighted networks under the neutral setting
(but underperforms when $\MaxMutantFitness>1$).
On the other hand, Closeness and PageRank take into account the structure of the graph and its connectivity.
For these two centrality heuristics we also tried selecting the top-$k$-nodes by largest value, which resulted in worse performance.
All Monte Carlo simulations were run over $5000$ iterations.

{\Paragraph{Performance vs. $\bm{\SizeConst}$} \cref{fig:fp_k} shows performance as the size constraint~$k$ increases for a fixed mutant fitness distribution. 
In agreement with \cref{lem:monotonicity,lem:submodularity}, the performance of all algorithms rises as~$k$ grows, while Greedy has diminishing returns. 
Notably, Greedy outperforms all heuristics especially for small size constraints, while  Pagerank forms high quality solutions for the undirected and unweighted graph Facebook. 
On the other hand, seed selection becomes more challenging for directed (Mammalia, Polblogs) and edge-weighted graphs (Colocation, Mammalia), in which only Greedy uncovers high-quality seed sets.
}

\Paragraph{Performance vs. $\bm{\FitnessM}$} \cref{fig:fp_delta_k} shows performance as the mutant fitness interval $[1,\MaxMutantFitness]$ increases, for fixed size $\SizeConst$.
Random selection performs poorly, showing that the problem is not trivial,
while the other two heuristics have mixed performance.
On the other hand, Greedy achieves
a steady, high-quality performance in all datasets and problem parameters.
%----------------------------------------------------------------------------
%CONCLUSION
%----------------------------------------------------------------------------
\section{Conclusion}\label{sec:conclusion}
We studied a natural optimization problem pertaining to network diffusion by the Heterogeneous Moran process, namely selecting a set of seed nodes that maximize the effect of the invasion.
To our knowledge, this is the first paper to study this standard optimization problem on Moran models.
We showed that the problem is strongly inapproximable in general, but becomes 
approximable 
on mutant-biased graphs, although %the exact solution remains intractable.
the exact solution remains $\NP$-hard.
Several interesting questions remain open for future work, such as,
is seed selection hard in the Standard model; 
and are there tighter approximations for mutant-biased graphs?
%----------------------------------------------------------------------------

\clearpage
\section*{Acknowledgments}
Work supported by grants from DFF (P.P. and P.K., 9041-00382B), Villum Fonden (A.P., VIL42117), and Charles University (J.T., UNCE 24/SCI/008 and PRIMUS 24/SCI/012).
\bibliographystyle{named_arxiv}
\bibliography{het_moran_arxiv}
\clearpage
\appendix
%----------------------------------------------------------------------------
%APPENDIX
%----------------------------------------------------------------------------
\appendix
\section{Appendix}\label{sec:app}

First we prove \cref{lem:upper_resident} and \cref{lem:lower_mutant}, that are used in the analysis of \cref{lem:hardness_main}.

\lemupperresident*
\begin{proof}
Let $q=\frac{\sfrac{1}{n}}{\sfrac{1}{n}+(n-1)y}$.
While at $\Config_1$, there exists at least one resident node $v\in V_2$ that is not covered by any mutant in $V_1$.
Let $\Config'_1$ be the first configuration that the process reaches in which the number of mutants in $V_1$ has changed from $\Config_1$
(i.e., $V_1$ either has one more, or one less mutant in $\Config'_1$ compared to $\Config_1$).
Let $F=\sum_{u\in V}\Fitness_{\Config_1}(u)$ be the total population fitness at $\Config_1$.
The probability that $v$ replaces a mutant in $V_1$ in a single step is $p_1\geq\frac{1}{F}\frac{1}{n}$.
On the other hand, the probability that any mutant in $V_2$ replaces a resident in $V_1$ in a single step is $p_2\leq\frac{|V_2| y}{F}\leq\frac{(n-1) y}{F}$.
Thus, the probability that $V_1$ has lost a mutant in $\Config'_1$ is at least $\frac{p_1}{p_1+p_2}\geq\frac{\sfrac{1}{n}}{\sfrac{1}{n}+(n-1)y}=q$.

Now, observe that $\Config'_1$ satisfies the conditions of $\Config_1$, i.e., $V_2\setminus (\Config'_1 \cup \{ u\in \Config'_1\colon (u,v)\in E \})\neq \emptyset$.
Thus we can repeat the above process until all nodes in $V_1$ have become residents, leading to the desired configuration $\Config_2$ with probability
\begin{linenomath*}
\[
p\geq q^{|V_1|}=\left(\frac{\sfrac{1}{n}}{\sfrac{1}{n}+(n-1)y}\right)^{|V_1|}\qedhere
\]
\end{linenomath*}
\end{proof}

\lemlowermutant*
\begin{proof}
Let $q=\frac{x/n}{x/n+n}$, and let $F=\sum_{u\in V}\Fitness_{\Config}(u)$ be the total population fitness in $\Config$.
As long as there are residents in $V_2$, the probability that a mutant node in $V_1$ replaces a resident in $V_2$ is at least $p_1\geq \frac{x}{F} \frac{1}{n}$.
On the other hand, the probability that a resident in $V_2$ (resp.\ $V_1$)
replaces a mutant in $V_1$ (resp. $V_2$) is at most $p_2\leq \frac{|V_2|}{F}$ (resp.\ $p_3\leq \frac{|V_1|}{F}$).
The probability that the first event occurs before the second and third one is thus at least $\frac{p_1}{p_1+p_2+p_3}\ge \frac{x/n}{x/n+|V_2|+|V_1|}=\frac{x/n}{x/n+n}=q$.
Observe that any other event that changes the configuration (i.e., mutants in $V_2$ replacing residents in $V_1$) 
results in a configuration that also covers $V_2$, 
thus we can repeat the above argument until all nodes in $V_2$ become mutants, which occurs with probability at least
\begin{linenomath*}
\[
p^*\geq q^{|V_2|}\ge\left(\frac{\sfrac{x}{n}}{\sfrac{x}{n}+n}\right)^{|V_2|}\qedhere
\]
\end{linenomath*}
\end{proof}

 We continue with proving the upper and lower bounds of  \cref{lem:hardness_main}.
\lemhardnessmain*
\begin{proof}
We prove the two assertions separately.
\begin{compactenum}
    \item{Item~1}:
    First, we argue that with probability at least $q=\frac{\sfrac{1}{n}}{\sfrac{1}{n}+(n-1)y}$, the process reaches a configuration $\Config_1$ such that  $V_2\setminus (\Config_1 \cup \{ u\in \Config_1\colon (u,v)\in E \})\neq \emptyset$.
    Indeed, since $\SeedSet$ does not form a set cover, there exists a node $v\in V_2$ that has no mutant incoming neighbor in $V_1$.
    If $v\not \in \SeedSet$, we are done.
    Otherwise, the probability that, in a single step, $v$ is replaced by any resident neighbor in $V_1$ is at least $p_1\geq\frac{1}{F} \frac{1}{n}$, where $F=\sum_{u\in V}\Fitness_{\SeedSet}(u)$ is the total population fitness at $\SeedSet$.
    On the other hand, any resident in $V_1$ is replaced by mutants in $V_2$ with probability $p_2\leq \frac{|V_2|y}{F}\leq \frac{(n-1)y}{F}$.
    Thus, the probability that the process reaches a desired configuration $\Config_1$ is at least $\frac{p_1}{p_1+p_2}\geq\frac{1/n}{1/n+(n-1)y}=q$.
    
    Second, \cref{lem:upper_resident} applies on $\Config_1$ to show that the process reaches a configuration $\Config_2$ with $\Config_2\cap V_1=\emptyset$ with probability at least $\left(\frac{\sfrac{1}{n}}{\sfrac{1}{n}+(n-1)y}\right)^{|V_1|}=q^{|V_1|}$.
    
    Third, while at configuration $\Config_2$, the process changes configuration when either a resident in $V_1$ replaces a mutant in $V_2$, or vice versa.
    We have already argued in the first step that the probability that the first event occurs before the second is at least $q$.
    Now, we repeat this process until all mutants in $V_2$ have become residents, which occurs with probability at least $q^{|V_2|-1}$, as $v\in V_2$ is already a resident in $\Config_2$.
    At this point the mutants have gone extinct, thus
    $
    \FP_{\FitnessG}(\SeedSet)\leq 1- q^{1+|V_1|+(|V_2|-1)}
    = 1-\left(\frac{\sfrac{1}{n}}{\sfrac{1}{n}+(n-1)y}\right)^{n}.
    $

    \item Item~2: We first prove the following statement. 
    Consider the process at any configuration~$\Config$ that covers~$V_2$, and let~$p^+$ be the probability that it reaches a subsequent configuration~$\Config^+$ with at least one more mutant in $V_1$. 
    We will show that $p^+\geq \frac{\frac{y}{n^2}\left(\frac{\sfrac{x}{n}}{\sfrac{x}{n}+n}\right)^{n}}{1-\left(1-\frac{y}{n^2}\right)\left(\frac{\sfrac{x}{n}}{\sfrac{x}{n}+n}\right)^{n}}$.
    Given any such configuration~$\Config$, let~$p^*$ be the probability that the process reaches a subsequent configuration~$\Config^*$ with $V_2 \cup (\Config\cap V_1)\subseteq \Config^*$. 
    By \cref{lem:lower_mutant}, we have $p^*\geq\Big(\frac{\sfrac{x}{n}}{\sfrac{x}{n}+n}\Big)^{|V_2|}\geq\Big(\frac{\sfrac{x}{n}}{\sfrac{x}{n}+n}\Big)^{n}$.
    While at~$\Config^*$, 
    the process only progresses 
    when a mutant in $V_2$ replaces a resident in $V_1$, or a resident in $V_1$ replaces a mutant in $V_2$. 
    As long as there are residents in $V_1$, the probability $q$ that a mutant from $V_2$ replaces a resident in $V_1$ before any such resident reproduces satisfies
    $q\geq\frac{\frac{y}{F}\frac{1}{n}}{\frac{y}{F}\frac{1}{n}+\frac{|V_1|}{F}}\geq\frac{\frac{y}{F}\frac{1}{n}}{\frac{y}{F}\frac{1}{n}+\frac{n-1}{F}}=\frac{y}{y+(n-1)n}\geq\frac{y}{n^2}$, where the last inequality holds as $y\leq 1\leq n$. 
    If this happens, we reach the desired configuration~$\Config^+$.
    Otherwise, 
    a resident in $V_1$ replaces a mutant in $V_2$, the resulting configuration still covers~$V_2$, and the argument repeats. 
    The Markov chain in \cref{fig:mc_np_hard} captures this process. 
    The states~$S_2$, $S_3$ and~$S_4$ denote that the process is in configurations~$\Config$, $\Config^*$ and~$\Config^+$, respectively. 
    Hence, $p^+$ is lower-bounded by the probability that a random walk starting in~$S_2$ (i.e., $\Config$) gets absorbed in~$S_4$ (i.e., $\Config^+$). 
    Let~$x_i$ be the probability that a random walk starting in~$S_i$ gets absorbed in~$S_4$. 
    We have
    $x_2 = p^*\cdot x_3 + (1-p^*) \cdot x_1$
    and $x_3 = q \cdot x_4 + (1-q) \cdot x_2$,
    with boundary conditions~$x_1 = 0$ and~$x_4 = 1$, whence~$x_2 = \frac{q \cdot p^*}{1-(1-q) \cdot p^*}$. 
    Since~$\Config \!\subset\! \Config^+$, set $\Config^+$ also covers~$V_2$, thus the reasoning applies for up to $n$ steps until fixation, resulting in 
    $\FP_{\FitnessG}(\Config)\geq(p^+)^n\geq
    \left(\frac{\frac{y}{n^2}\Big(\frac{\sfrac{x}{n}}{\sfrac{x}{n}+n}\Big)^{n}}{1-\Big(1-\frac{y}{n^2}\Big)\Big(\frac{\sfrac{x}{n}}{\sfrac{x}{n}+n}\Big)^{n}}\right)^n.$\qedhere
\end{compactenum}

\end{proof}

We continue with the proof of \cref{thm:hardness_general} and \cref{thm:hardness_mutant_biased}.
To make our analysis easier, we first prove two simple lemmas.

\begin{restatable}{lemma}{loglowerbound}\label{lem:log_lower_bound}
For every $\zeta>0$, we have $\ln\left(1+\frac{1}{\zeta}\right)\geq \frac{1}{\zeta+1}$.
\end{restatable}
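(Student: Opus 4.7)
The plan is to rewrite the left-hand side as an integral and exploit monotonicity of $1/t$. Concretely, I would first observe that
\[
\ln\!\left(1 + \frac{1}{\zeta}\right) = \ln(\zeta+1) - \ln(\zeta) = \int_\zeta^{\zeta+1} \frac{dt}{t}.
\]
Since the integrand $1/t$ is strictly decreasing on $(0,\infty)$, its minimum on $[\zeta,\zeta+1]$ is attained at $t=\zeta+1$, so the integral is bounded below by the width of the interval times this minimum, giving $\ln(1+1/\zeta) \geq 1/(\zeta+1)$, which is exactly what we need.

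An equally clean alternative, which I would mention if space allows, is a direct appeal to the mean value theorem applied to $\ln$ on $[\zeta,\zeta+1]$: there exists $c\in(\zeta,\zeta+1)$ with $\ln(\zeta+1)-\ln(\zeta) = 1/c$, and $c<\zeta+1$ immediately yields $1/c > 1/(\zeta+1)$. Yet another route would be to invoke the classical inequality $\ln(1+x) \geq x/(1+x)$ (valid for $x>-1$) with $x=1/\zeta$, which reduces algebraically to the statement.

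There is no serious obstacle here; the statement is a routine calculus fact and all three approaches are essentially one-liners. The only decision is stylistic: I would favor the integral/monotonicity presentation because it avoids citing an external inequality and makes the bound transparent, which is useful given that the lemma is intended as a technical helper in subsequent estimates.
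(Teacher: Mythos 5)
Your proof is correct, and it takes a genuinely different (and arguably more substantive) route than the paper's. The paper starts from the classical fact that $\left(1+\frac{1}{\zeta}\right)^{\zeta+1}\geq e$ for all $\zeta>0$ and simply takes logarithms and divides by $\zeta+1$; note that this starting inequality is logically equivalent to the claim itself, so the paper's argument is essentially a restatement that defers the real content to a standard but uncited fact. Your integral representation $\ln\left(1+\frac{1}{\zeta}\right)=\int_\zeta^{\zeta+1}\frac{dt}{t}\geq \frac{1}{\zeta+1}$ (or equivalently the mean value theorem applied to $\ln$ on $[\zeta,\zeta+1]$, or the inequality $\ln(1+x)\geq \frac{x}{1+x}$) gives a self-contained derivation from first principles. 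What the paper's version buys is brevity and a direct link to a familiar limit characterization of $e$; what yours buys is that every step is justified without appeal to an external inequality whose proof is as hard as the lemma. Either is acceptable for a technical helper lemma of this kind.
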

\begin{proof}
\begin{linenomath*}
\begin{align*}
&\left(1+\frac{1}{\zeta}\right)^{\zeta+1}\geq e\\
\Leftrightarrow&\ln\left(1+\frac{1}{\zeta}\right)^{\zeta+1}\geq \ln e\\
\Leftrightarrow&\ln\left(1+\frac{1}{\zeta}\right)\geq \frac{1}{\zeta+1}\qedhere
\end{align*}
\end{linenomath*}
\end{proof}

\begin{restatable}{lemma}{lemprobbound}\label{lem:prob_bound}
Let $p\in(0,1)$, and $\beta\leq \frac{\ln(\sfrac{1}{p})}{n}$.
We have
\begin{linenomath*}
\begin{align*}
\left( \frac{1}{1+\beta}\right)^n\geq p
\end{align*}
\end{linenomath*}
\end{restatable}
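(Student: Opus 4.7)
The plan is to take logarithms and reduce the claim to a one-line application of the standard concavity bound $\ln(1+\beta) \le \beta$, combined with the hypothesis on $\beta$. Since the bound on $\beta$ is an upper bound, and since $\ln(1/p) > 0$ for $p \in (0,1)$, the interesting case is $\beta \ge 0$ (so in particular $1+\beta > 0$ and the expressions are all well-defined and positive).

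First, I would observe that the target inequality $\left(\tfrac{1}{1+\beta}\right)^n \ge p$ is equivalent, by taking reciprocals and then logarithms (both monotone on the positive reals), to
\begin{equation*}
n\ln(1+\beta) \;\le\; \ln(1/p).
\end{equation*}
Next, I would invoke the elementary inequality $\ln(1+\beta) \le \beta$, valid for every $\beta > -1$ (this is just the fact that the graph of $\ln$ lies below its tangent line at $1$, or equivalently the first-order Taylor bound from concavity of $\ln$). Combining this with the hypothesis $\beta \le \ln(1/p)/n$, I would chain
\begin{equation*}
n\ln(1+\beta) \;\le\; n\beta \;\le\; \ln(1/p),
\end{equation*}
which is exactly what is needed. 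Exponentiating and taking reciprocals recovers the claim.

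There is no real obstacle here; the argument is three lines of monotone rewriting plus one invocation of $\ln(1+\beta)\le\beta$. In particular, the previous helper lemma (\cref{lem:log_lower_bound}) points the other direction and is not needed for this bound. The only point to flag is that we tacitly use $1+\beta > 0$; this is automatic in the intended usage where $\beta \ge 0$, and in any case is the natural domain on which $(1+\beta)^{-n}$ is a positive real.
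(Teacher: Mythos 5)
Your proof is correct. The chain $(1+\beta)^n \le e^{n\beta} \le e^{\ln(1/p)} = 1/p$ (equivalently, $n\ln(1+\beta)\le n\beta\le\ln(1/p)$) is valid, and your handling of the domain issue is right: the tangent-line bound $\ln(1+\beta)\le\beta$ holds for all $\beta>-1$, and for $-1<\beta<0$ the conclusion is trivially true anyway since $(1+\beta)^{-n}>1>p$. The paper takes a genuinely different, more roundabout route: it substitutes $\gamma=1+\sfrac{1}{\beta}$, rewrites $\frac{1}{1+\beta}=1-\frac{1}{\gamma}$, sets $k=\ln(\sfrac{1}{p})^{-1}$ so that $\gamma\ge nk+1$, and then invokes the multiplicative inequality $\left(1-\frac{1}{nk+1}\right)^{nk}\ge e^{-1}$ before raising to the power $\sfrac{1}{k}$. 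That argument is essentially the reciprocal form of \cref{lem:log_lower_bound} (the secant bound $\ln(1+\sfrac{1}{\zeta})\ge\frac{1}{\zeta+1}$), whereas yours uses the complementary tangent bound $\ln(1+\beta)\le\beta$ directly on the original variable; the two are the standard lower and upper first-order estimates for the logarithm. Your version is shorter, avoids the substitution and the implicit assumption $\beta>0$ that the paper's change of variables requires, and you are right that \cref{lem:log_lower_bound} is not needed here (the paper only uses it later, in the proofs of \cref{thm:hardness_general,thm:hardness_mutant_biased}, to convert the resulting bounds on $x$ into polynomial form). Both proofs are sound; yours is the cleaner one.
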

\begin{proof}
Let $\gamma=1+\sfrac{1}{\beta}$, thus $\beta=\sfrac{1}{(\gamma-1)}$.
Then 
\begin{linenomath*}
\begin{align*}
\gamma = 1 + \frac{1}{\beta} \geq 1+ \frac{n}{\ln\left(\frac{1}{p}\right)}
\end{align*}
\end{linenomath*}
Moreover
\begin{linenomath*}
\begin{align*}
\frac{1}{1+\beta} = \frac{1}{1+\frac{1}{\gamma-1}}=\frac{1}{\frac{\gamma}{\gamma-1}}=\frac{\gamma-1}{\gamma}=1-\frac{1}{\gamma}
\end{align*}
\end{linenomath*}
Let $k=\ln(\sfrac{1}{p})^{-1}$, thus $p=e^{-\sfrac{1}{k}}$ and $\gamma\geq nk+1$.
We have
\begin{linenomath*}
\begin{align*}
&\left(1-\frac{1}{nk + 1}\right)^{nk} \geq e^{-1}\\
\Rightarrow& \left(1-\frac{1}{nk + 1}\right)^n \geq e^{-\frac{1}{k}}\\
\Rightarrow& \left( 1-\frac{1}{\gamma}\right)^n\geq p \\
\Rightarrow & \left( \frac{1}{1+\beta} \right)^n \geq p
\end{align*}
\end{linenomath*}
as desired.
\end{proof}

\thmhardnessgeneral*
\begin{proof}
The proof  is by algebraic manipulation on the inequalities of \cref{lem:hardness_main}.
In particular, we argue that there exist $x$ and $y$ that have polynomially-long description (in $n$) for which the inequalities stated in the theorem hold.
In turn, this completes a polynomial reduction from Set Cover to distinguishing between $\max_{\SeedSet}\FP_{\FitnessG}(\SeedSet)\leq\varepsilon$ and
$\max_{\SeedSet}\FP_{\FitnessG}(\SeedSet)> 1-\varepsilon$ in the Heterogeneous Moran process.

First, assume that $\SeedSet$ does not form a set cover, thus \cref{item:hardness_main_upper} of \cref{lem:hardness_main} applies.
We solve the corresponding inequality to arrive at a suitable value for $y$.
In particular, for $\FP_{\FitnessG}(\SeedSet)\leq \varepsilon$, it suffices to find a $y$ small enough such that
\begin{linenomath*}
\begin{align*}
&~1-\left(\frac{\frac{1}{n}}{\frac{1}{n}+(n-1)y}\right)^{n} \leq \varepsilon\\
\Leftrightarrow  &~1-\varepsilon\leq\left(\frac{1}{1+n(n-1)y}\right)^{n} \numberthis \label{eq:one_minus_eps_bound} 
\end{align*}
\end{linenomath*}
Using \cref{lem:prob_bound} for $p=1-\varepsilon$, we set
\begin{linenomath*}
\begin{align*}
&~n(n-1)y =\beta \leq \frac{\ln\left(\frac{1}{p}\right)}{n}\\
\Rightarrow&~y\leq \frac{1}{O(n^3)}
\end{align*}
\end{linenomath*}
as $\varepsilon$ is fixed.
Hence $y$ suffices to be polynomially small in $n$ for the inequality of \cref{item:hardness_main_upper} of  \cref{lem:hardness_main} to hold.

On the other hand, if $\SeedSet$ forms a set cover, \cref{item:hardness_main_lower} of \cref{lem:hardness_main} applies.
We solve the corresponding inequality to arrive at a suitable value for $x$.
In particular, for $\FP_{\FitnessG}(\SeedSet) >1-\varepsilon$, it suffices to find an $x$ large enough such that
\begin{linenomath*}
\begin{align*}
&~\left(\frac{\frac{y}{n^2}\left(\frac{\frac{x}{n}}{\frac{x}{n}+n}\right)^{n}}{1-\left(1-\frac{y}{n^2}\right)\left(\frac{\frac{x}{n}}{\frac{x}{n}+n}\right)^{n}}\right)^n >1-\varepsilon
\end{align*}
\end{linenomath*}
Substituting with $\alpha=\left(\frac{x}{x+n^2}\right)^{n}$, we have
\begin{linenomath*}
\begin{align*}
&~\left(\frac{\frac{y}{n^2}\alpha}{1-\left(1-\frac{y}{n^2}\right)\alpha}\right)^{n} >1-\varepsilon\\
\Rightarrow&~\left( \frac{\frac{y}{n^2}\alpha}{1-\alpha +\frac{y}{n^2}\alpha } \right)^n >1-\varepsilon\\
\Rightarrow&~\left( \frac{1}{1+ \frac{n^2(1-\alpha)}{y\alpha}} \right)^n > 1-\varepsilon
\end{align*}
\end{linenomath*}
Using \cref{lem:prob_bound} for $p=1-\varepsilon$, we set
\begin{linenomath*}
\begin{align*}
&~\frac{n^2(1-\alpha)}{y\alpha}=\beta \leq \frac{\ln\left(\frac{1}{p}\right)}{n}\\
\Rightarrow&~\alpha\geq \frac{n^3}{n^3+yc}
\end{align*}
\end{linenomath*}
where $c=\ln(\sfrac{1}{p})$ is a constant.
We thus have
\begin{linenomath*}
\begin{align*}
&\left(\frac{x}{x+n^2}\right)^{n} \geq \frac{n^3}{n^3+yc}\\
\Rightarrow &~ \left(\frac{1}{1+\frac{n^2}{x}}\right)^n \geq \frac{n^3}{n^3+yc}
\end{align*}
\end{linenomath*}
Using \cref{lem:prob_bound} for $p=\frac{n^3}{n^3+yc}$, we set
\begin{linenomath*}
\begin{align*}
&~\frac{n^2}{x} = \beta \leq \frac{\ln\left(\frac{1}{q}\right)}{n}\\
\Rightarrow&~x\geq \frac{n^3}{\ln\left(\frac{1}{q}\right)} = \frac{n^3}{\ln\left(1+\frac{1}{\frac{n^3}{cy}}\right)}
\end{align*}
\end{linenomath*}
Using \cref{lem:log_lower_bound} for $\zeta=\frac{n^3}{cy}=O(n^6)$ as $c=O(1)$ and $y=1/O(n^3)$, we have
\begin{linenomath*}
\begin{align*}
x\geq\frac{n^3}{\frac{1}{n^7}}=O(n^{10})
\end{align*}
\end{linenomath*}
Thus $x$ suffices to be polynomially large in $n$ for the inequality of \cref{item:hardness_main_lower} of \cref{lem:hardness_main} to hold.
\end{proof}

\thmhardnessmutantbiased*
\begin{proof}
The proof  is by algebraic manipulation on the inequalities of \cref{lem:hardness_main} for the specific case where $y=1$ and $x\geq y$.
Observe that this makes the corresponding graph mutant-biased.
In particular, we argue that there exists an $x$ with a polynomially-long description (in $n$) for which the inequalities stated in the theorem hold.
In turn, this completes a polynomial reduction from Set Cover to distinguishing between $\max_{\SeedSet}\FP_{\FitnessG}(\SeedSet)\leq 1-1/(n^{2n})$ and $\max_{\SeedSet}\FP_{\FitnessG}(\SeedSet)> 1-1/(n^{2n})$ in the Heterogeneous Moran process.

First, assume that $\SeedSet$ does not form a set cover.
\cref{item:hardness_main_upper} of \cref{lem:hardness_main} for $y=1$ gives
\begin{linenomath*}
\begin{align*}
\FP_{\FitnessG}(\SeedSet) \leq 1-\left(\frac{1}{1+(n-1)n}\right)^{n}\leq 1-\frac{1}{n^{2n}}
\end{align*}
\end{linenomath*}

On the other hand, if $\SeedSet$ forms a set cover, \cref{item:hardness_main_lower} of \cref{lem:hardness_main} applies.
For $y=1$, we solve the corresponding inequality to arrive at a suitable value for $x$.
In particular, for $\FP_{\FitnessG}(\SeedSet) >1-\frac{1}{n^{2n}}$, it suffices to find an $x$ large enough such that
\begin{linenomath*}
\begin{align*}
\FP_{\FitnessG}(\SeedSet) \geq 
\left(\frac{\frac{1}{n^2}\left(\frac{\frac{x}{n}}{\frac{x}{n}+n}\right)^{n}}{1-\left(1-\frac{1}{n^2}\right)\left(\frac{\frac{x}{n}}{\frac{x}{n}+n}\right)^{n}}\right)^n>1-\frac{1}{n^{2n}}
\end{align*}
\end{linenomath*}
Substituting with $a=\left(\frac{x}{x+n^2}\right)^{n}$, we have
\begin{linenomath*}
\begin{align*}
&~\left(\frac{\frac{1}{n^2}\alpha}{1-\left(1-\frac{1}{n^2}\right)\alpha}\right)^{n} >1-\frac{1}{n^{2n}}\\
\Rightarrow&~\left( \frac{\frac{1}{n^2}\alpha}{1-\alpha +\frac{1}{n^2}\alpha } \right)^n >1-\frac{1}{n^{2n}}\\
\Rightarrow&~\left( \frac{1}{1+ \frac{n^2(1-\alpha)}{\alpha}} \right)^n > 1-\frac{1}{n^{2n}}
\end{align*}
\end{linenomath*}

Using \cref{lem:prob_bound} for $p=1-\frac{1}{n^{2n}}$, we set
\begin{linenomath*}
\begin{align*}
&~\frac{n^2(1-\alpha)}{\alpha}=\beta \leq \frac{\ln\left(\frac{1}{p}\right)}{n}
\\
\Rightarrow&~\alpha\geq \frac{n^3}{n^3+c}
\end{align*}
\end{linenomath*}
where $c=\ln(\sfrac{1}{p})$.
We thus have
\begin{linenomath*}
\begin{align*}
&\left(\frac{x}{x+n^2}\right)^{n} \geq \frac{n^3}{n^3+c}\\
\Rightarrow &~ \left(\frac{1}{1+\frac{n^2}{x}}\right)^n \geq \frac{n^3}{n^3+c}
\end{align*}
\end{linenomath*}

Using \cref{lem:prob_bound} for $p=\frac{n^3}{n^3+c}$, we set
\begin{linenomath*}
\begin{align*}
&~\frac{n^2}{x} = \beta \leq \frac{\ln\left(\frac{1}{q}\right)}{n}\\
\Rightarrow&~x\geq \frac{n^3}{\ln\left(\frac{1}{q}\right)} = \frac{n^3}{\ln\left(1+\frac{1}{\frac{n^3}{c}}\right)}
\end{align*}
\end{linenomath*}

Using \cref{lem:log_lower_bound} for $\zeta=\frac{n^3}{c}$, we have

\begin{linenomath*}
\begin{align*}
x\geq\frac{n^3}{\frac{1}{\frac{n^4}{c}}}=\frac{n^7}{c}
\end{align*}
\end{linenomath*}

Finally, recall that $c=\ln(\sfrac{1}{p})=\ln\left(1+\frac{1}{n^{2n}-1}\right)$.
Using \cref{lem:log_lower_bound} again with $\beta=n^{2n}-1$, we conclude that

\begin{linenomath*}
\begin{align*}
x\geq\frac{n^7}{\ln\left(1+\frac{1}{n^{2n}-1}\right)}\geq\frac{n^7}{\frac{1}{n^{2n}}}=2^{O(n\log n)}
\end{align*}
\end{linenomath*}
Observe that $x$ has polynomially-long (in $n$) description, thus our reduction from Set Cover is in polynomial time.
\end{proof}

\lemloopyeq*
\begin{proof}
Consider the Heterogeneous and Loopy Moran process, and assume that they are in the same configuration $\ConfigZ\subseteq V$.
Consider any edge $(u,v)\in E$ with $u\neq v$.
Let $p_{u\rightarrow v}$ and $p'_{u\rightarrow v}$ be the probabilities of $u$ transferring its trait to $v$ under the Heterogeneous and Loopy Moran processes, respectively, when in configuration $\ConfigZ$. 
By the definition of the models, we have
\begin{align*}
    p_{u\rightarrow v} = \frac{\Fitness_{\ConfigZ}(u)}{{\TotalFitness}}w(u,v), \,\,\, p'_{u\rightarrow v} &= \frac{1}{n}w_{\ConfigZ}\\(u,v)&=\frac{\Fitness_{\ConfigZ}(u)\cdot\Weight(u,v)}{n\cdot \Fitness_{\max}}.\nonumber
\end{align*}
Moreover, let $R=E\setminus \{(x,x)\colon x\in V\}$ be the set of edges without the self loops in $G$.
Note that, form $\ConfigZ$, each process can progress to a distinct configuration $\ConfigZ'\neq \ConfigZ$ only if a node $u$ transfers its trait along an edge $(u,v)\in R$.
Let $p_1$ and $p'_1$ be the probability that this occurs in the Heterogeneous and Loopy process, repsectively, and we have
\begin{linenomath*}
\begin{align*}
p_1=\sum\limits_{(x,y)\in R}\frac{\Fitness_{\ConfigZ}(x)}{\TotalFitness}w(x,y), \,\,\,\,\,p_1'=&\sum\limits_{(x,y)\in R}\frac{1}{n}w_{\ConfigZ}(x,y)\\=&\sum\limits_{(x,y)\in R}\frac{\Fitness_{\ConfigZ}(x)\cdot\Weight(x,y)}{n\cdot \Fitness_{\max}}. 
\end{align*}
\end{linenomath*}

Finally, observe that
\begin{linenomath*}
\begin{align*}
\frac{p_{u\rightarrow v}}{p_1}=\frac{\frac{\Fitness_{\ConfigZ}(u)}{{\TotalFitness}}w(u,v)}{\sum\limits_{(l,r)\in R}\frac{\Fitness_{\ConfigZ}(l)}{\TotalFitness}w(l,r)}=\frac{{\Fitness_{\ConfigZ}(u)}w(u,v)}{\sum\limits_{(l,r)\in R}{\Fitness_{\ConfigZ}(l)}w(l,r)}\\
\frac{p'_{u\rightarrow v}}{p'_1}=\frac{\frac{\Fitness_{\ConfigZ}(u)\cdot\Weight(u,v)}{n\cdot \Fitness_{\max}}}{\sum\limits_{(l,r)\in R}\frac{\Fitness_{\ConfigZ}(l)\cdot\Weight(l,r)}{n\cdot \Fitness_{\max}} }=\frac{{\Fitness_{\ConfigZ}(u)}w(u,v)}{\sum\limits_{(l,r)\in R}{\Fitness_{\ConfigZ}(l)}w(l,r)}.
\end{align*}
\end{linenomath*}

Thus, the probability distribution $\Prob[\RandomConfig_{t+1} = \Config|\RandomConfig_{t+1} \neq \RandomConfig_{t}]$ is the same in the two processes, yielding the same fixation probability starting from the same seed set, as desired.
\end{proof}
\end{document}